\documentclass[11pt,reqno]{amsart}
\usepackage{amsmath,amssymb}
\usepackage{amsthm} 
\usepackage{amscd}

\usepackage[pdftex]{graphicx}%%% arXiv—p
\usepackage[all]{xy}

\makeatletter
\@addtoreset{equation}{section}
\makeatother
\setlength{\textwidth}{16 truecm}
\setlength{\textheight}{22 truecm}
\setlength{\hoffset}{-2.0 cm}
\setlength{\voffset}{-1.0 cm}
\usepackage{enumerate}

\def\CC{{\mathbb C}}

\def\RR{{\mathbb R}}

\def\PP{{\mathbb P}}

\def\ee{{\mathrm e}}

\def\ii{{\sqrt{-1}}}

\def\cE{{\mathcal E}}

\def\cK{{\mathcal K}}
\def\cL{{\mathcal L}}
\def\cM{{\mathcal M}}

\def\cZ{{\mathcal Z}}

\def\fb{{\mathfrak b}}

\def\ft{{\mathfrak t}}

\def\rU{{\mathrm U}}

\def\ri{{\mathrm i}}
\def\rr{{\mathrm r}}

\def\tA{{\widetilde A}}

\def\tK{{\widetilde K}}

\def\tgamma{{\widetilde \gamma}}

\def\hJ{{\widehat J}}

\def\hX{{\widehat X}}

\def\hvarpi{{\widehat \varpi}}

%\def\rfy{{\widehat{\mathfrak y}}}

%\def\fp{\flushpar}

%\def\fp{\flushpar}

  % \: \;

%% Function

%% Physics 

% \redefine\tau{\big\tau}
\def\qed{\hbox{\vrule height6pt width3pt depth0pt}}

\def\nuI#1{{\nu^{\mathrm{I}}_{#1}}}

\def\book#1{\rm{#1}, }
\def\paper#1{\textit{#1}, }
\def\jour#1{\rm{#1}, }
\def\yr#1{({\rm{#1}) }}
\def\vol#1{\textbf{#1}}
\def\pages#1{\rm{#1}}

\def\publaddr#1{\rm{#1}, }
\def\publ#1{\rm{#1}, }
\def\by#1{{\rm{#1}, }}

\baselineskip 0.4 true cm

\newtheorem{theorem}{Theorem}[section]

%\theorembodyfont{\rmfamily}
\newtheorem{lemma}[theorem]{Lemma}

\def\book#1{\rm{#1}, }
\def\paper#1{\textit{#1}, }
\def\jour#1{\rm{#1}, }
\def\yr#1{({\rm{#1}) }}
\def\vol#1{\textbf{#1}}
\def\pages#1{\rm{#1}}

\def\publaddr#1{\rm{#1}, }
\def\publ#1{\rm{#1}, }
\def\by#1{{\rm{#1}, }}

\begin{document}

%\begin{multicols}{2}

\title{Statistical mechanics of elastica for the shape of supercoiled DNA: 
hyperelliptic elastica of genus three}

%Electrical Engineering and Computer Science,
\author{Shigeki Matsutani}
% \affiliation{
%1) Graduate School of Natural Science \& Technology, 
%Kanazawa University Kakuma Kanazawa, 920-1192, Japan}
%
%\email{s-matsutani@se.kanazawa-u.ac.jp}

\date{\today}

\begin{abstract}
This article studies the statistical mechanics of elastica as a model of the shapes of the supercoiled DNA, and shows that its excited states can be characterized by the focusing modified KdV (MKdV) equation due to thermal fluctuation.
Following the previous paper (Matsutani and Previato, Physica D {\bf{430}} (2022) 133073), the hyperelliptic solutions of the focusing modified KdV (MKdV) equation of genus three are considered.
There appears a pattern as a repetition of the modulation of figure-eight and the inverse {\lq}S{\rq} as a thermal fluctuation of elastica, called the S-eight mode.
Our model states that the excited states of elastica due to the thermal effect have the S-eight mode, which reproduces the shapes of the AFM image of the supercoiled DNAs observed by Japaridze et al. (Nano Lett. \vol{17} 3, \yr{2017} 1938).
\end{abstract}

\maketitle

\section{Introduction}\label{sec:1}
In the previous paper \cite{MP22}, the author with Emma Previato investigated an algebro-geometric model for the shape of supercoiled DNA.
As mentioned there, the mathematical description of the shape of the supercoiled DNA is a challenging problem in which no one can find the shape mathematically.
Since the shape of the supercoiled DNA plays crucial roles in life \cite{Betal, CDLT, LS, Petal, VV}, there are many studies on the shape \cite{BM, GPL, KP, LS, SCT, TsuruWadati}.
The electron microscope images of DNA on a plane show that, except in a few cases, the shapes of the loop are much more complicated than Euler's elastica, a circle and eight-figure.
Furthermore, it is neither squeezed nor dense, but is characterized by voids between intersections that are weakly governed by elastic forces.
These properties mean that it cannot be realized as a minimal state of its Euler-Bernoulli energy functional even by considering its three-dimensional effect;
 the voids cannot appear mathematically as long as we consider the minimal state of a given energy functional.
The minimal state cannot have any other parameters and is therefore expressed by the elliptic functions, which have only double periods and no ability to express complicated shapes.

In other words, we should move on to consider the excited state of supercoiled DNA due to thermal effect, i.e., the statistical mechanics of geometric objects.
The author sometimes called this model quantized elastica due to the analogy between the Planck constant $\hbar$ and the inverse of the temperature $\beta$, since the statistical mechanics of geometric objects corresponds to the quantization of geometric objects after Wick rotation; 
(Two-dimensional quantum gravity is related to the statistical mechanics of random triangulation of some Riemann surfaces \cite{Itzykson,Witten}.
The statistical mechanics of the liquid crystal observed in the laboratory is related to the $O(3)$ gauge field theory \cite{Frenkel}.)

Since the elastica model has the constraint that it does not stretch, it cannot be regarded as an extension of the random walk model \cite{DoiEdwards}, but it has an interesting internal symmetry related to the hyperelliptic curves.
The author proposed a model of the statistical mechanics of elastica to express the shapes of the supercoiled DNA in 1998 \cite{Mat97}.
The shapes can only be realized if thermal effects are taken into account, and must be the excited states of the elastica rather than minimal states. 
The excited states of the elastica on the plane are well described by curves whose tangential angles $\phi$ obey the modified KdV (MKdV) equation \cite{AS},
\begin{equation}
(\partial_{t}+\alpha \partial_s)\phi
           +\frac{1}{8}
\left(\partial_{s} \phi \right)^3
+\frac{1}{4}\partial_{s}^3 \phi=0,
\label{4eq:rMKdV2}
\end{equation}
where $t$ and $s$ are the real axes and $\alpha$ is a real parameter.
Here $t$ does not mean the physical time axis, but one of the inner-space directions of the excited states due to thermal fluctuations.
(The auxiliary time $t$ corresponds to the Schwinger time of the quantum field theory \cite{Ramond, Schwinger} in this system, which shows the propagation from an excited state to the others. )
It is shown that the inner symmetry of the statistical mechanics of elastica is governed by the hyperelliptic curves and their moduli whereas the symmetry of the classical elastica at zero-temperature is determined by the elliptic curves and their moduli \cite{Mat97, Mat02b, MO03a, MP16}.
Thus we have also referred to this model as generalized elasticae \cite{MP16}.

The statistical mechanics of elastica is a kind of solvable model of statistical mechanics, which can be expressed by solutions of the integrable system, the MKdV equation.
Finding the solutions, however, requires advanced knowledge of modern algebraic geometry and the computational techniques of hyperelliptic functions.
Therefore, no one has attacked this problem.

The author and Emma Previato decided to solve the problem of elastica in 2004 based on the papers \cite{Mat02b, Pr93}.
The author has developed the tools together with co-authors \cite{Mat02b, MP16, MP22, EEMOP08, EMO08} following Baker's approach and a similar movement \cite{Baker97, BEL97b, BuL04, BEL20}.
To solve the problem, they considered that a novel approach was needed that directly connects the algebraic curves, rather than the theta-function approach \cite{P0}.
They have spent two decades refining and reconstructing Abelian function theory, including hyperelliptic function theory, as problems in algebraic geometry \cite{Mat10, MP22, M24}.
(The computational techniques of hyperelliptic functions have the ability to describe some complicated systems that cannot be described by elliptic functions. 
We note that the development of computational techniques provided solutions to the problem of the motion of a particle around the black hole \cite{EHKKLS}).

Since supercoiled DNA exists in three-dimensional space, its statistical mechanical model of excited states of elastica is governed by the nonlinear Schr\"odinger (NLS) equation and the complex MKdV (CMKdV) equation rather than the MKdV equation \cite{Mat99a}.
However, since the observed shapes of supercoiled DNA are on a plane and the solutions of the NLS and CMKdV equations are much more complicated than the MKdV equation, we focus on the two-dimensional model.
Though the hyperelliptic solutions have not been concretely obtained, the elliptic function solutions of (two-dimensional) elastica, figure-eight due to Euler \cite{Euler44}, have been studied well which correspond to the ground state of the model, and are often observed in AFM \cite{Petal}.

Thus the author attacked this two-dimensional problem in the previous paper with Previato \cite{MP22} to find the real analytic solution of the MKdV equation in terms of the data of hyperelliptic curves of genus two \cite{Mat02b}.
Then the attempt failed in \cite{MP22}.
It was clarified that any hyperelliptic non-degenerate curves of genus two could not provide the excited states of elastica because of the reality conditions \cite{MP22}.
As a conclusion of \cite{MP22}, hyperelliptic curves of higher genus ($g \ge 3$) should be investigated to find the solution of (\ref{4eq:rMKdV2}).
Therefore, the associated paper \cite{M24a} is devoted to finding the solutions of (\ref{4eq:rMKdV2}) in terms of the meromorphic functions of hyperelliptic curves of genus three.
There we find some approximate solutions of (\ref{4eq:rMKdV2}) of the gauged MKdV equation.

This paper, based on the results in the associated paper \cite{M24a}, shows that some hyperelliptic curves of genus three supply the forms of the excited states of elastica which have never been obtained.
They exhibit typical shapes, modulation of a repetition of figure-eight and inverse of {\lq}S\rq.
We call it the S-eight mode.
In other words, our model states that the excited states of the statistical mechanics of elastica have the S-eight mode.
It is quite surprising that we find a similar shape in some of the AFM images of a supercoiled DNA in \cite[Figure 4]{JMB}. 
The observed shapes of supercoiled DNA have the S-eight mode.

The figure-eight given by Euler in 1744 which appears similar shapes of the short closed supercoiled DNAs, e.g. in \cite{Petal} but no one has ever mathematically reproduce any more complicated shape of supercoiled DNA with voids.
We emphasize that this demonstration shows that an important first step has been made toward the complete mathematical expression of the supercoiled DNAs.

The content is following:
Section 2 reviews the statistical mechanics of elastica as in \cite{Mat97}.
 previous results, which is the same as \cite{MP22}.
Section 3 is devoted to the geometry of the hyperelliptic curves and their relation to the MKdV equation over $\CC$ and the gauge MKdV equation.
Section 4 provides hyperelliptic solutions of the gauged MKdV equation of genus three.
Section 5 shows the numerical evaluation of the solutions and provides the discussion of the solutions.
We consider the computational results and their relations to the supercoiled DNA. Section 6 gives the conclusion of this paper.

\section{Statistical Mechanics of elastica}
\label{sec:SMelas}

We consider that the supercoiled DNA is caused by thermal fluctuations.
As a model, we could assume that the stretching must be suppressed and obey the elastic force known as the Euler-Bernoulli energy function.
When supercoiled DNAs are observed, e.g. by atomic force microscopy, they lie on a plane.
Therefore, we consider their shape of the flat plane. 
In other words, we consider the statistical mechanics of elastica on a plane.

Let us consider an analytic and isometric immersion $Z$ of a loop $S^1$ into the complex plane $\CC$ \cite{MP16},
$$
\cM
:=\{ Z: S^1 \hookrightarrow \CC\ | \ \mbox{analytic, isometric} \}.
$$
Let $ds^2 =\overline{d Z} dZ$ be the differential of arclength $s$ such that $\displaystyle{\int ds = 2\pi}$.
Then its tangential vector is given by $\partial_s Z$ satisfying $|\partial_s Z|=1$ and its curvature is $\displaystyle{k:=\frac{1}{\ii}\partial_s\log \partial_s Z}$ for $\partial_s := \partial/\partial s$.
Let $\partial_s Z=\ee^{\ii \psi}$ where $\psi$ is a real value and $k = \partial_s \psi$.
The elastica is given as the minimal point of the Euler-Bernoulli energy
\begin{equation}
 \cE[Z]:=\frac{1}{2}\oint_{S^1} \, k^2(s) d s 
= \oint_{S^1} \{Z, s\}_{\mathrm{SD}} ds,
\label{4eq:EBenergy}
\end{equation}
where $\displaystyle{\{Z, s\}_{\mathrm{SD}}={\small{
\frac{\partial }{\partial s}
\left(\frac{\partial^2 Z/\partial s^2}{\partial Z/\partial s}\right)
-\frac12\left(\frac{\partial^2 Z/\partial s^2}{\partial Z/\partial s}\right)^2}}}$ is the Schwarzian derivative.

Let us consider the partition function $\cZ[\beta]$ of the statistical mechanics of elastica, which is given by
\begin{gather}
\cZ[\beta]= \int_\cM DZ \exp(-\beta \cE[Z])
\end{gather}
$DZ$ is the Feynman measure of this system, and $1/\beta$ is the temperature.
Due to $\cM$ we only consider the isometric deformation.
We remark that this isometric condition is strong and this model is completely different from the well-known random chain model \cite{DoiEdwards}.

To evaluate this partition function $\cZ[\beta]$, we consider the infinitesimal deformation $Z \to Z+\delta Z$ parameterized by the infinite dimensional virtual time parameter $\ft = (t_1, t_2, \ldots)$.
We apply the effective action method to this evaluation 
(The time $\ft$ corresponds to the Schwinger time of the quantum field theory \cite{Ramond, Schwinger} in this system, which shows the propagation from one excited state to another).
In other words, $\delta Z=\partial_t Z\delta t=(u_1(s)+\ii u_2(s))\partial_s Z \delta t$ must preserve the isomorphic deformation, i.e. $[\partial_t, \partial_s]=0$.
By the Goldstein-Petrich condition $[\partial_t, \partial_s]=0$ \cite{GoldsteinPetrich1}, we have
$\partial_t \partial_s Z=\ii \partial_t \psi \partial_s Z$, while
$\partial_s \partial_t Z=[(\partial_s u_1+k u_2)+\ii(\partial_s u_2 - k u_1)] \partial_s Z$.
The isometric deformation is given by
$$
\partial_t k = \Omega u_2, \quad u_1=\partial_s^{-1}k u_2,\quad
\Omega:=\partial_s^2 + \partial_s k \partial_s^{-1}k .
$$
Following the theory of the effective action for the functional integral method \cite{Ramond}, we consider the deformation such that the Euler-Bernoulli energy changes as $E[Z+\delta Z]=E[Z]+\displaystyle{ \delta t\int k (\partial_t k) d s}+ O(\delta t^2)$.
We require the condition that $\delta E:=\displaystyle{\delta t\int k (\partial_t k) d s}=0$.

The condition $|\partial_s Z|=1$, namely $\partial_s Z \in \rU(1)$, allows us to define an action $g_{s_0}Z(s) = Z(s- s_0)$ of $\rU(1)$.
It can be expressed as $(\partial_s - \partial_t)\partial_s Z=0$ and $(\partial_s - \partial_t)\partial_s k=0$.
This deformation can be regarded as the trivial isometric deformation and the isoenergy deformation, i.e. the trivial gauge transformation.
Thus we introduce the infinite deformations with infinite proper time $\ft =(t_1, t_2, \cdots)$ by starting the trivial transformation as the initial deformation,
$$
\partial_{t_{n+1}}k = \Omega \partial_{t_{n}}k,\quad
\partial_{t_1} k = \partial_s k.
$$
Then $\partial_{t_n} k = \Omega^n \partial_s k$ is consistent with the $n$-th focusing modified KdV (MKdV) equation \cite{MP16}.
For the deformation, the Euler-Bernoulli energy is preserved, i.e., $E[Z+\delta Z]=E[Z]+ \delta t_n\displaystyle{\int k (\partial_{t_n} k) d s}= E[Z]$.
The requirement that the vanishing of the first-order deformation $\delta E=\delta t\displaystyle{\int k (\partial_t k ) d s}=0$ guarantees complete energy preservation due to the integrability of the MKdV equation, i.e., $\partial_{t_n} E[Z]=0$.
In other words, the statistical mechanics of elastica has infinite gauge symmetry, in general.
Since the MKdV equation preserves the Euler-Bernoulli energy of the deformation of curves. $\cM$ is decomposed to
$$
\cM = \bigsqcup_{E\in \RR} \cM_{E},
\quad \cM_{E}:= \{ Z\in \cM\ |\ \cE[Z]=E\},
$$
which naturally includes the Euler's elastica as its minimal point (the state of the zero temperature.)

It has the orbital decomposition in terms of the loop soliton whose curvature obeys the modified KdV equation;

Then $\cZ[\beta]$ can be evaluated like
$$
\cZ[\beta] = \int d E \frac{\partial \mathrm{Vol}(\cM_E)}{\partial E}
 \ee^{-\beta E},
$$
where $\mathrm{Vol}(\cM_E)$ is the volume of $\cM_E$.

Thus to evaluate the statistical mechanics of the elastica is reduced to finding the solutions of the modified KdV equations.

The evaluation of the statistical mechanics of elastica is given as follows.
\begin{enumerate}

\item[(0)]
to regard the elastica model as the minimal point or the state at the zero temperature.

\item to find the solutions of the (focusing) MKdV equation as the excited state of the Euler-Bernoulli energy associated with the hyperelliptic curves $X$ of genus $g$,

\item to find the energy-preserving orbit consisting of the iso-energy state $\cM_E$, which is a real subspace $\cM_{X,E}$ of the Jacobi varieties $J_X$ of $X$ (given by real period matrices),

\item to evaluate the moduli dependence of the volume of $\cM_{X,E}$ for the moduli parameters of $X$ as subspace of the moduli space of the hyperelliptic curves.

\item to assign the topology in $\cM$ and to define a suitable measure with respect to the Euler-Bernoulli energy, and

\item to analyze $\cM$ for a given $\beta$.

\end{enumerate}

The author with Previato investigated the geometrical structure of $\cM$ as a generalized elastica problem \cite{Mat10, MO03a, MP16, Pr93}.
However finding the solutions of the focusing MKdV equation is quite difficult.
The reality conditions CI, CII, CIII as mentioned after Theorem \ref{4th:MKdVloop} restrict the space $\cM$.
The infinite symmetry can regarded as the loop group action \cite{AdamsHarnadPreviato} but reality condition restricts the group action, which quite differs from the loop group over $\CC$.

There is a mystery in the research of the shapes of supercoiled DNA, that though we have found the circle types, the eight-figure types, and much complicated shapes with several holes of supercoiled DNA in the AFM observations, we cannot find twisted loops on a plane with two or three holes.
As in \cite{MP22}, the real analytic solution of the focusing MKdV equation can be found for the cases corresponding to the circle and the eight-figure of the elliptic curve $g=1$ but cannot found for the hyperelliptic curves of genus two due to the reality condition.
The real hyperelliptic solution of genus two might be related to twisted loops with two or three holes \cite{MP22} if exists.
We require higher genus curves to obtain the real analytic solution of the focusing MKdV equation.
Since the higher twisted loops observed in the laboratory are much more complicated than the figure eight, we conjecture that this phenomenon must be related to the difficulty of the existence of the real analytic solution of the focusing MKdV equations.

The shape of the supercoiled DNA might be determined by profound reality of the algebraic geometry over the complex field.
We believe that there is a deep relationship between nature (life sciences) and mathematics, and that symmetry is crucial.

\bigskip

If we replace $\beta$ with $\ii/\hslash$, the problem is reduced to a functional integration representation of a quantum system related to a geometrical object. In theoretical physics, the method to sum a weight with respect to an invariant functional over possible states is known as the Feynman path integral (functional integral) method for statistical mechanics and quantum mechanics. 
The statistical mechanics of elastica \cite{Mat97} can be seen as a problem in which we quantize geometric objects with a non-trivial (classical or ground state) equation, i.e. the static MKdV equation, and a non-trivial constraint, i.e. isometry. 
Thus the author called it quantized elastica \cite{Mat02b}.
A similar idea appeared in \cite{Brylinski}, but it was too difficult to solve in the 1990s \cite{Mat97}.
However, the patient efforts of the last three decades to develop the study of Abelian function theory \cite{Mat02b, MP16, MP22, EEMOP08, EMO08, BEL97b, BuL04, BEL20, M24} have allowed a concrete attack on the solution.
For example, by Buchstaber Leykin \cite{BuL04}, we can study the moduli space of hyperelliptic curves by using the heat equation on the moduli parameter $b_i$ in (\ref{4eq:hypC}).
In other words, due to the recent development, this model could be explicitly evaluated using these mathematical tools and could be compared with the experimental results in the laboratory \cite{MP22}.
Thus, this model itself is quite crucial and interesting for the life science and theoretical physics.

Hence the solution of the MKdV equation can be regarded as an excited state of elastica due to thermal effect.
We investigate the explicit solutions using the hyperelliptic function, while the ground state of the elastica (at zero temperature) is expressed by the elliptic function.
Though in the above formulation we consider a loop (closed curve), in this paper we consider curves whose curvature obeys the MKdV equation.

\section{Hyperelliptic solutions of the focusing modified KdV equation over $\CC$}
\label{sec:HESGE}

We review the solution of the generalized elastica problem of excited states of elastica \cite{Mat02b} for a hyperelliptic curve $X_g$ of genus $g$ over $\CC$,
\begin{equation}
X_g:=\left\{(x,y) \in \CC^2 \ |
\ y^2 = (x-b_1)(x-b_2)\cdots(x-b_{2g+1})\right\}
\cup \{\infty\},
\label{4eq:hypC}
\end{equation}
where $b_i$'s are mutually distinct complex numbers.
Let $\lambda_{2g}=\displaystyle{-\sum_{i=1}^{2g+1} b_i}$ and $S^k X_g$ be the $k$-th symmetric product of the curve $X_g$. 
The Abelian integral $v : S^k X_g \to \CC^g$, $(k=1, \ldots, g)$ is defined by its $i$-th component $v_i$ $(i =1, \ldots, g)$,
\begin{equation}
v_i((x_1,y_1),\ldots,(x_k,y_k))=\sum_{j=1}^k
 v_i(x_j,y_j), 
\quad
v_i(x,y) = \int^{(x, y)}_\infty \nuI{i},\quad
\nuI{i} = \frac{x^{i-1}d x}{2y}.
\label{4eq:firstdiff}
\end{equation}

\cite{Mat02b} shows the hyperelliptic solutions of the MKdV equation over $\CC$,
\begin{theorem} {\textrm{\cite{Mat02b}}}
\label{4th:MKdVloop}
For $((x_1,y_1),\cdots,(x_g,y_g)) \in S^g X_g$, a fixed branch point $b_a$ $(a=1, 2, \ldots, 2g+1)$, and $u:= v( (x_1,y_1),$ $\cdots,(x_g,y_g) )$,
$$
\displaystyle{
   \psi(u) :=
-\ii \log (b_a-x_1)(b_a-x_2)\cdots(b_a-x_g)
}
$$
satisfies the MKdV equation over $\CC$,
\begin{equation}
	(\partial_{u_{g-1}}-\frac{1}{2}
(\lambda_{2g}+3b_a)
          \partial_{u_{g}})\psi
           -\frac{1}{8}
\left(\partial_{u_g} \psi\right)^3
 -\frac{1}{4}\partial_{u_g}^3 \psi=0,
\label{4eq:loopMKdV2}
\end{equation}
where $\partial_{u_i}:= \partial/\partial u_i$ as an differential identity in $S^g X_g$ and $\CC^g$.
\end{theorem}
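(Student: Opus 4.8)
The plan is to turn the statement, which lives on $S^g X_g$, into an identity among symmetric functions of the divisor points $(x_1,y_1),\dots,(x_g,y_g)$ by pulling the flows $\partial_{u_g}$ and $\partial_{u_{g-1}}$ back through the Abel map. First I would invert the Jacobian of $v$. Writing $F(x)=\prod_{j=1}^g(x-x_j)$ and using $\partial u_i/\partial x_j=x_j^{i-1}/(2y_j)$, the Jacobian is a Vandermonde matrix scaled by $\mathrm{diag}(1/2y_j)$, so its inverse is read off from the Lagrange interpolation polynomials $\ell_j(x)=F(x)/((x-x_j)F'(x_j))$. Taking the two highest coefficients yields the flows I need, in Dubrovin form,
$$
\partial_{u_g}=\sum_{j=1}^g \frac{2y_j}{F'(x_j)}\,\partial_{x_j},
\qquad
\partial_{u_{g-1}}=\sum_{j=1}^g \frac{2y_j\,(x_j-\sigma_1)}{F'(x_j)}\,\partial_{x_j},
$$
where $\sigma_1=\sum_j x_j$ and $F'(x_j)=\prod_{k\neq j}(x_j-x_k)$.

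Since $\partial_{x_j}\psi=\ii/(b_a-x_j)$, every derivative of $\psi$ becomes a rational sum over the divisor, starting from $\partial_{u_g}\psi=2\ii\sum_j y_j/(F'(x_j)(b_a-x_j))$. I would evaluate such sums by residue calculus on $X_g$: a sum $\sum_j h(x_j)/F'(x_j)$ is the sum of the residues of $h(x)/F(x)$ at the $x_j$, so closing the contour and collecting the contribution at $\infty$ converts these into explicit symmetric functions, while the relations $y_j^2=P(x_j)$, with $P(x)=\prod_{i=1}^{2g+1}(x-b_i)$ and $\partial_{x_j}y_j=P'(x_j)/(2y_j)$, dispose of the square-root factors. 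For $\partial_{u_{g-1}}\psi$ the partial fraction $(x_j-\sigma_1)/(b_a-x_j)=-1+(b_a-\sigma_1)/(b_a-x_j)$ already reduces the expression to $(b_a-\sigma_1)\,\partial_{u_g}\psi$ plus the lower sum $-2\ii\sum_j y_j/F'(x_j)$, and this is where the distinguished branch point $b_a$ and the trace $\sigma_1$ of the divisor first enter.

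The technical heart is $\partial_{u_g}^2\psi$ and $\partial_{u_g}^3\psi$: each application of $\partial_{u_g}$ differentiates not only the explicit factor and $y_j$, but also every $F'(x_j)$, whose dependence on all the $x_k$ generates cross terms. I would organize this by introducing the resolvent $R(x)=\sum_j y_j/(F'(x_j)(x-x_j))=Q(x)/F(x)$, with $Q$ the degree $\le g-1$ interpolation of the $y_j$, express all the derivatives as values and low-order expansion coefficients of $R$ and of $P$ at $x=b_a$, and use the standard Lagrange-interpolation telescoping identities so that the cross terms collapse. Then I would assemble the full combination $(\partial_{u_{g-1}}-\tfrac12(\lambda_{2g}+3b_a)\partial_{u_g})\psi-\tfrac18(\partial_{u_g}\psi)^3-\tfrac14\partial_{u_g}^3\psi$ and check that it vanishes identically in the symmetric functions.

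The \emph{main obstacle} is precisely this last matching: controlling the behaviour at $\infty$ sharply enough that the non-constant symmetric functions cancel and the exact prefactor $\tfrac12(\lambda_{2g}+3b_a)$ survives, together with the delicate cubic term $(\partial_{u_g}\psi)^3$. Here $\lambda_{2g}=-\sum_i b_i$ is the coefficient of $x^{2g}$ in $P$, and $\lambda_{2g}/2$ is exactly the first subleading coefficient in the expansion of $y=\sqrt{P(x)}$ near $\infty$, so it should surface at the order where a residue computation probes that term; pinning down the $3b_a$ requires carefully tracking how $b_a$ enters through both the explicit factors $1/(b_a-x_j)$ and the derivatives of $y_j$. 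An alternative is to pass through the Miura transformation to the KdV hierarchy, whose corresponding hyperelliptic potential is the classical trace $\sigma_1=\sum_j x_j$, and then transfer the equation back; but fixing the precise Miura/gauge identity for the logarithmic form attached to the branch point $b_a$ is about as delicate as the constant-matching above, so I expect the direct residue computation to be the cleaner route.
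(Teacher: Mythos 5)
The paper itself states Theorem \ref{4th:MKdVloop} without proof, importing it from \cite{Mat02b}, so there is no in-paper argument to match against; measured against the proof in that reference, your route is essentially the same one. The set-up you give is correct and checks out: inverting the Vandermonde-times-diagonal Jacobian of the Abel map via Lagrange interpolation does yield $\partial_{u_g}x_j=2y_j/F'(x_j)$ and $\partial_{u_{g-1}}x_j=2y_j(x_j-\sigma_1)/F'(x_j)$, the formula $\partial_{x_j}\psi=\ii/(b_a-x_j)$ is right, and the partial-fraction step correctly reduces $\partial_{u_{g-1}}\psi$ to $(b_a-\sigma_1)\,\partial_{u_g}\psi-2\ii\sum_j y_j/F'(x_j)$. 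The gap is that everything after this point --- the computation of $\partial_{u_g}^2\psi$ and $\partial_{u_g}^3\psi$, the collapse of the cross terms coming from differentiating the $F'(x_j)$ and $y_j$, and the exact emergence of the constant $\tfrac12(\lambda_{2g}+3b_a)$ against the cubic term --- is only announced as a plan, and that is where essentially the entire content of the proof in \cite{Mat02b} lies: there it is carried out through a chain of explicit summation lemmas for traces of the form $\sum_j x_j^k y_j/F'(x_j)$ (equivalently, differential identities for the hyperelliptic al-type function attached to the branch point $b_a$), which is exactly the residue/resolvent bookkeeping you defer. So the proposal is a faithful reconstruction of the known strategy with correct first steps, but it is not yet a proof; to close it you must actually carry out the trace-formula reduction of the second and third $u_g$-derivatives and exhibit the cancellation, rather than assert that it ``should'' occur.
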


We, here, emphasize the difference between the MKdV equations (\ref{4eq:rMKdV2}) over $\RR$ and (\ref{4eq:loopMKdV2}) over $\CC$.
The difference is crucial since we want to obtain solutions of (\ref{4eq:rMKdV2}), not (\ref{4eq:loopMKdV2}).
However, the latter is expressed well in terms of the hyperelliptic function theory.
In the associated paper \cite{M24a}, we have investigated the solutions of (\ref{4eq:rMKdV2}) based on the solutions of (\ref{4eq:loopMKdV2}).

As mentioned in \cite[(11)]{MP22}, we describe the difference.
By introducing real and imaginary parts, $ u_b = u_{b\,\rr} + \ii u_{b\,\ri}$ and $ \psi = \psi_{\rr} + \ii \psi_{\ri}$, the real part of (\ref{4eq:loopMKdV2}) is reduced to the gauged MKdV equation with gauge field $A(u)=(\lambda_{2g}+3b_a-\frac{3}{4}(\partial_{u_{g}\, \rr}\psi_\ri)^2)/2$,
\begin{equation}
-(\partial_{u_{g-1}\, \rr}-
A(u)\partial_{u_{g}\, \rr})\psi_\rr
           +\frac{1}{8}
\left(\partial_{u_g\, \rr} \psi_\rr\right)^3
+\frac{1}{4}\partial_{u_g\, \rr}^3 \psi_\rr=0
\label{4eq:gaugedMKdV2}
\end{equation}
by the Cauchy-Riemann relations as mentioned in \cite[(11)]{MP22}.
The conserved energy $\displaystyle{\frac12\int (\partial_s \psi)^2 ds =}$
$\displaystyle{\frac12\int [(\partial_s \psi_\rr)^2-(\partial_s \psi_\ri)^2] ds}$
$+\ii\displaystyle{\int [(\partial_s \psi_\rr)(\partial_s \psi_\ri)] ds}$.

\bigskip
In order to obtain a solution of (\ref{4eq:rMKdV2}) or excited states of elastica
in terms of the data in Theorem \ref{4th:MKdVloop},  the following conditions must be satisfied \cite{MP22}:

\begin{enumerate}

\item[CI] $\prod_{i=1}^g |x_i - b_a|=$ a constant $(> 0)$ for all $i$ in Theorem \ref{4th:MKdVloop},

\item[CII] $d u_{g\,\ri}=d u_{g-1\, \ri}=0$ in Theorem \ref{4th:MKdVloop}, and

\item[CIII] $A(u)$ is a real constant:
if $A(u)=$ constant, (\ref{4eq:gaugedMKdV2}) is reduced to (\ref{4eq:rMKdV2}). 
\end{enumerate}

\section{Hyperelliptic Curves of Genus Three}\label{sec:g=3}

For the reality conditions we consider the hyperelliptic curves $X$ of genus three.
According to the associated paper \cite{M24a}, we show the results of the real solution of the gauged MKdV equation and the MKdV equation in \cite{M24a} without proofs.

%\begin{lemma} \label{4lm:g3gene_y2}
Let $\gamma \ee^{2\ii\varphi} :=(x-b_0)$ in (\ref{4eq:hypC}) of $g=3$.
We rewrite (\ref{4eq:hypC}) for the angle expression,
\begin{gather}
X:=\{(x, y) \ |\ y^2=-64 \frac{\gamma^4\ee^{8\ii\varphi}}{k_1^2 k_2^2k_3^2} 
(1-k_1^2 \sin^2 \varphi)(1-k_2^2 \sin^2 \varphi)
(1-k_3^2 \sin^2 \varphi)\},
\label{4eq:HEcurve_phi}
\end{gather}
where 
$\displaystyle{
k_a = \frac{2\ii\sqrt[4]{e_{2a-1}e_{2a}}}{\sqrt{e_{2a-1}}- \sqrt{e_{2a}}}
=\frac{\sqrt{\gamma}}{\beta_a}}$, $(a=1,2,3)$.

We note that (\ref{4eq:HEcurve_phi}) means the double covering $\hvarpi: \hX \to X$ and there exists the double covering Jacobi variety $\hvarpi_J: \hJ_X \to J_X$ \cite{M24a, M24}.
We have the holomorphic one-form in the covering Jacobi variety $\hJ_X$ for $S^g \hX$.
\begin{lemma} \label{4lm:dudphi}
For $(\gamma \ee^{\ii \varphi_i}, K_j)_{j=1, 2, 3}\in S^3\hX$, where $K_j:=\tgamma_j \tK(\varphi_j)$, ($j=1, 2, 3$), and $$\displaystyle{
\tK(\varphi):=\frac{\sqrt{\gamma
 (1-k_1^2 \sin^2 \varphi)(1-k_2^2 \sin^2 \varphi)
(1-k_3^2 \sin^2 \varphi)}}
{k_1k_2k_3}}, \quad \tgamma = \pm 1,
$$
 the following holds:

\noindent
$$\displaystyle{
\begin{pmatrix} d u_1 \\ d u_2\\ du_3\end{pmatrix}
=-
\begin{pmatrix}
\frac{ \ee^{-2\ii\varphi_1}}{8\gamma^2 K_1}&
\frac{ \ee^{-2\ii\varphi_2}}{8\gamma^2 K_2}&
\frac{ \ee^{-2\ii\varphi_3}}{8\gamma^2 K_3}\\
\frac{\ii \ee^{-\ii\varphi_1}\sin(\varphi_1)}{4\gamma K_1}&
\frac{\ii \ee^{-\ii\varphi_2}\sin(\varphi_2)}{4\gamma K_2}&
\frac{\ii \ee^{-\ii\varphi_3}\sin(\varphi_3)}{4\gamma K_3}\\
\frac{- \sin^2(\varphi_1)}{2 K_1}&
\frac{- \sin^2(\varphi_2)}{2 K_2}&
\frac{- \sin^2(\varphi_3)}{2 K_3}\\
\end{pmatrix}
\begin{pmatrix} d \varphi_1 \\ d \varphi_2 \\d \varphi_3\end{pmatrix}
}.
$$
Let the matrix be denoted by $\cL$.
Then $\det(\cL)=\displaystyle{
\frac{\sin(2\phi_2 - 2\phi_1) + \sin(2\phi_3 - 2\phi_2) + \sin(2\phi_1 - 2\phi_3)}{4^4 K_1 K_2 K_3}}$.
\end{lemma}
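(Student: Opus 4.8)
The plan is to substitute the angle parametrization $(x-b_0)=\gamma\ee^{2\ii\varphi}$ of (\ref{4eq:HEcurve_phi}) into the first-kind differentials $\nuI{i}=x^{i-1}\,dx/(2y)$ and read off the Jacobian of the change of variables $(x_1,x_2,x_3)\mapsto(\varphi_1,\varphi_2,\varphi_3)$. First I would differentiate, $dx=2\ii\gamma\ee^{2\ii\varphi}\,d\varphi$, and solve the curve equation for $y$: from $y^2=-64\,\gamma^4\ee^{8\ii\varphi}\prod_a(1-k_a^2\sin^2\varphi)/(k_1^2k_2^2k_3^2)$ one gets $y=8\ii\,\gamma^{3/2}\ee^{4\ii\varphi}K$ on $\hX$, the definition of $\tK(\varphi)$ being arranged precisely so that $\sqrt{\gamma\prod_a(1-k_a^2\sin^2\varphi)}=k_1k_2k_3\,K/\sqrt{\gamma}$, with the square-root sign carried by $\tgamma=\pm1$. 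Feeding these in, $\nuI{1}=dx/(2y)$ is proportional to $\ee^{-2\ii\varphi}\,d\varphi/K$, and the shifted basis $(x-b_0)^{k-1}dx/(2y)$, $k=1,2,3$, produces forms proportional to $\ee^{-2\ii\varphi}$, to a constant, and to $\ee^{2\ii\varphi}$, all with the common factor $d\varphi/K$.

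Next I would take the linear combinations of these three forms that define $u_1,u_2,u_3$ in \cite{M24a}. Since $u_i=\sum_j v_i$ is a sum over the three points, $du_i=\sum_j(\,\cdot\,)\,d\varphi_j$ and each column of the transition matrix collects the contribution of one point. Using $\ee^{-\ii\varphi}\sin\varphi=(1-\ee^{-2\ii\varphi})/(2\ii)$ and $\sin^2\varphi=(2-\ee^{2\ii\varphi}-\ee^{-2\ii\varphi})/4$, which are respectively a combination of the constant and $\ee^{-2\ii\varphi}$ forms, and of all three, the three rows become proportional to $\ee^{-2\ii\varphi}$, $\ee^{-\ii\varphi}\sin\varphi$, and $\sin^2\varphi$; matching normalizations gives the matrix $\cL$ and the relation $du=-\cL\,d\varphi$. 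This is the delicate part: one must carry the factors of $\gamma$ and $\ii$ from $dx$, from $y$, and from the covering conventions $\hvarpi:\hX\to X$ consistently, so that the $\gamma$- and $\ii$-powers in each entry of $\cL$ come out as written.

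For the determinant I would factor $1/K_j$ out of the $j$-th column, reducing to $\det\cL=(K_1K_2K_3)^{-1}\,c_1c_2c_3\,\det N$, where $c_1,c_2,c_3$ are the row constants coming from the denominators $8\gamma^2,4\gamma,2$, and $N$ has columns $(\ee^{-2\ii\varphi_j},\ee^{-\ii\varphi_j}\sin\varphi_j,\sin^2\varphi_j)$. Writing $a_j=\ee^{-\ii\varphi_j}$, $b_j=\sin\varphi_j$, the columns are $(a_j^2,a_jb_j,b_j^2)$, so $N$ is a binary-quadratic Vandermonde and factors as $\det N=\prod_{i<j}(a_ib_j-a_jb_i)$. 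A one-line computation gives $a_ib_j-a_jb_i=\ee^{-\ii\varphi_i}\sin\varphi_j-\ee^{-\ii\varphi_j}\sin\varphi_i=\sin(\varphi_j-\varphi_i)$, whence $\det N=\prod_{i<j}\sin(\varphi_j-\varphi_i)$. I would then pass to the stated sum by the identity $\sin2\alpha+\sin2\beta+\sin2\delta=-4\sin\alpha\sin\beta\sin\delta$ (valid when $\alpha+\beta+\delta=0$) applied to $\alpha=\varphi_2-\varphi_1$, $\beta=\varphi_3-\varphi_2$, $\delta=\varphi_1-\varphi_3$, which shows $\sin(2\varphi_2-2\varphi_1)+\sin(2\varphi_3-2\varphi_2)+\sin(2\varphi_1-2\varphi_3)=4\det N$. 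The numerical constants $8,4,2$ together with this extra factor $4$ then assemble into the denominator $4^4$.

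I expect the main obstacle to be not the determinant identity itself, which is the clean binary-Vandermonde factorization plus the sum-to-product identity above, but the consistent bookkeeping of the overall $\gamma$- and $\ii$-normalization through the square root of the curve equation and the double covering $\hX\to X$: it is precisely this tracking that pins down the constants appearing in $\cL$ and hence the final form of $\det\cL$, and where any sign or power slip would propagate.
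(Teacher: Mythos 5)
The paper itself gives no proof of this lemma: Section~\ref{sec:g=3} explicitly states that the results are quoted from \cite{M24a} ``without proofs,'' so there is nothing internal to compare your argument against line by line. Judged on its own, your reconstruction is the natural one and its verifiable core is correct. The substitution $x-b_0=\gamma\ee^{2\ii\varphi}$, $dx=2\ii(x-b_0)\,d\varphi$, $y=8\ii\gamma^{3/2}\ee^{4\ii\varphi}K$ does produce differentials proportional to $\ee^{-2\ii\varphi}d\varphi/K$, and the identities $\ii\ee^{-\ii\varphi}\sin\varphi=(1-\ee^{-2\ii\varphi})/2$ and $-\sin^2\varphi=(\ee^{2\ii\varphi}-2+\ee^{-2\ii\varphi})/4$ show the three rows of $\cL$ are exactly upper-triangular combinations of the shifted basis $(x-b_0)^{k-1}dx/(2y)$, $k=1,2,3$, as you say. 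The determinant part is clean and checks out: $\det\bigl(a_j^2,a_jb_j,b_j^2\bigr)_{j}=\prod_{i<j}(a_ib_j-a_jb_i)$, $\ee^{-\ii\varphi_i}\sin\varphi_j-\ee^{-\ii\varphi_j}\sin\varphi_i=\sin(\varphi_j-\varphi_i)$, and $\sin2\alpha+\sin2\beta+\sin2\delta=-4\sin\alpha\sin\beta\sin\delta$ for $\alpha+\beta+\delta=0$ gives precisely the stated numerator with the factor $4$ completing $8\cdot4\cdot2\cdot4=4^4$. Your own warning about the $\gamma$- and $\ii$-bookkeeping is well placed, and in fact bites even against the statement as printed: a naive substitution gives $8\sqrt{\gamma}$ rather than $8\gamma^2$ in the first row, and factoring the $\ii$ from row two and the $-1$ from row three leaves a residual $-\ii$ (and $\gamma^{-3}$) in $\det\cL$ relative to the displayed formula. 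These must be absorbed by the normalization of $u_1,u_2,u_3$ and of $K_j$ chosen in \cite{M24a} (and by the convention $\gamma=1$ used throughout the paper), so they are convention issues rather than errors in your method, but they are exactly where a self-contained write-up would need to pin things down.
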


As we are concerned with the situation that $y/\ee^{4\ii \varphi}$ is real or pure imaginary, we assume that the branch points surround the circle whose center is $(b_0,0)$ and radius is $\gamma=1$.
Noting that we handle the double covering $\hX$ of $X$ with twelve branch points, we define $\varphi_{\fb a}^{+\pm}:=\pm \sin^{-1}(1/k_a)$ and $\varphi_{\fb a}^{-\pm}
=\pi - \varphi_{\fb a}^{+\pm}$ as in Figure \ref{fg:Fig01}.

Further, for the case Figure \ref{fg:Fig01} (a) $k_1> k_2 > k_3>1.0$, we assume $\varphi_\fb^\pm:=\varphi_{\fb1}^{+\pm}$ whereas for the case Figure \ref{fg:Fig01} (b) we let $\varphi_\fb^\pm:=\varphi_{\fb1}^{\pm+}$.

\begin{figure}
\begin{center}

\includegraphics[width=0.41\hsize]{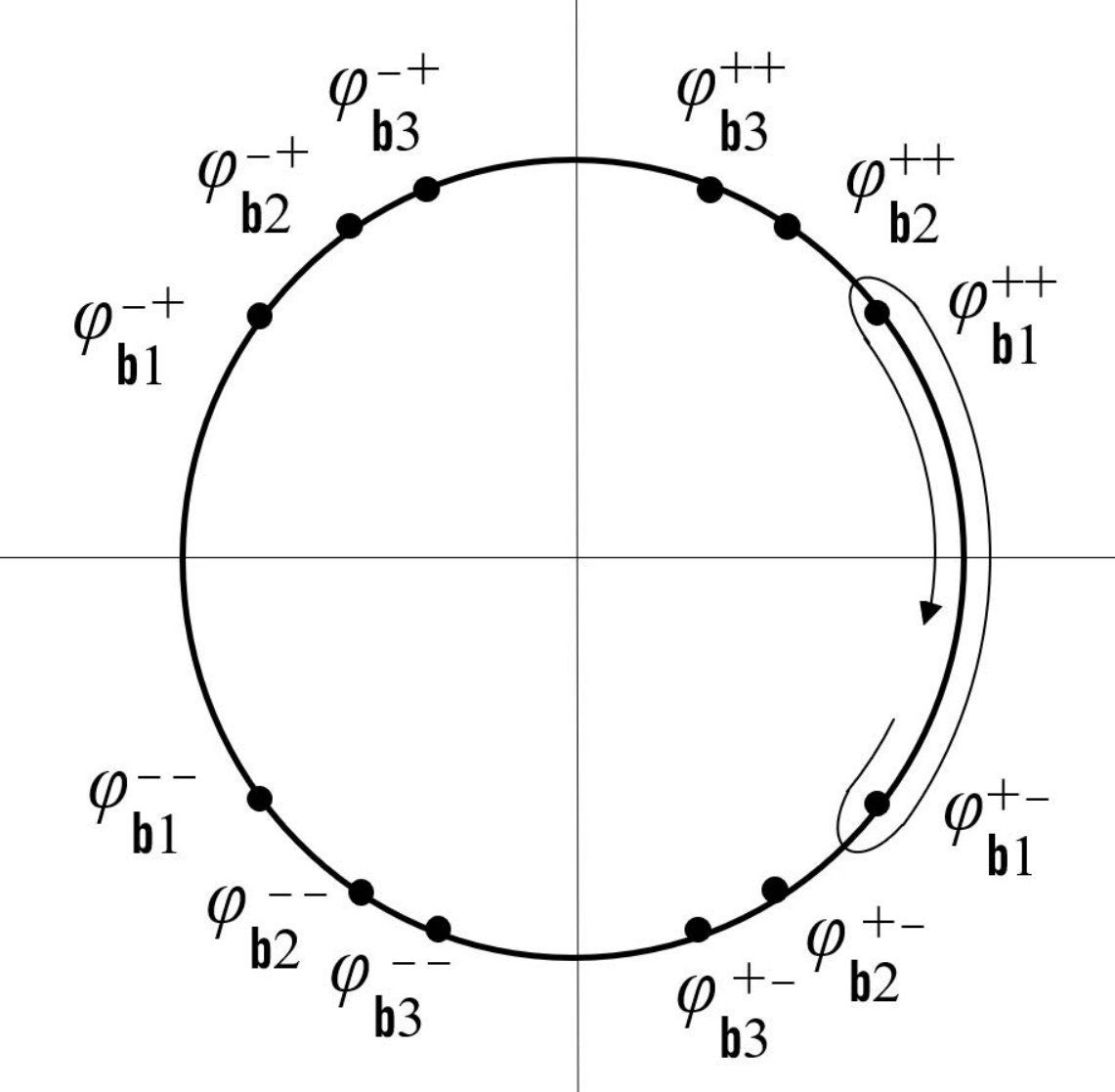}
\hskip 0.1\hsize
\includegraphics[width=0.42\hsize]{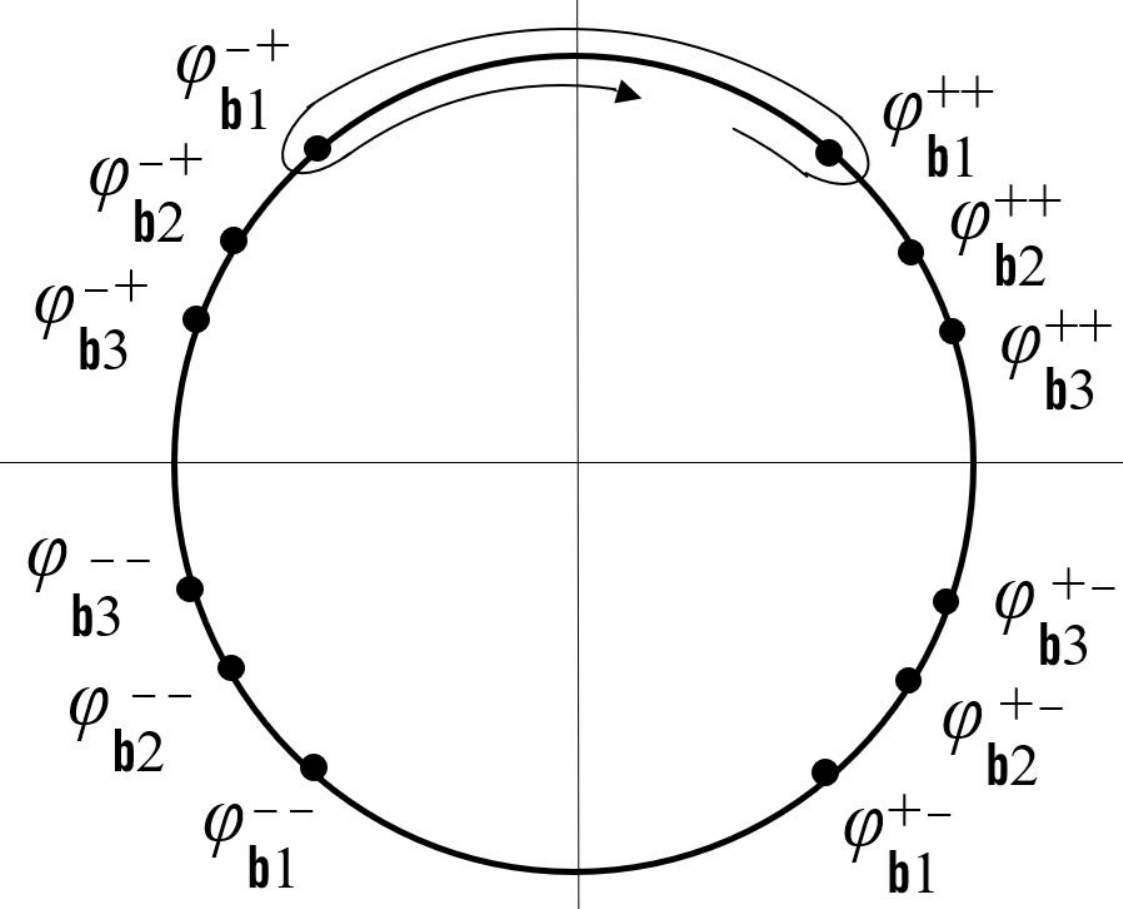}

(a) \hskip 0.35\hsize (b)

\end{center}

\caption{
The orbits of each $\varphi_i$ in the quadrature:
(a): $k_1> k_2 > k_3>1.0$.
(b): $k_3>k_2>k_1>1.0$.
}\label{fg:Fig01}
\end{figure}

We also have the inverse of Lemma \ref{4lm:dudphi}:

\begin{lemma} \label{4lm:dudphi3}
For $\varphi_j \in [\varphi_\fb^-, \varphi_\fb^+]$, $(j=1,2,3)$ such that $\varphi_\neq \varphi_j$ $(i\neq j)$, we have
\begin{equation}
\displaystyle{
\begin{pmatrix} d \varphi_1 \\ d \varphi_2 \\ d \varphi_3\end{pmatrix}
=\cK \cM
\begin{pmatrix} d u_1 \\ d u_2\\ du_3\end{pmatrix}
},
\qquad
\cL^{-1}=\cK \cM,
\label{4eq:Elas3.4}
\end{equation}
where
$
\displaystyle{
\cK
:=-
\begin{pmatrix}
\frac{K_1}{\sin(\varphi_2-\varphi_1)\sin(\varphi_3-\varphi_1)}&0& 0 \\
0& \frac{K_2}{\sin(\varphi_3-\varphi_2)\sin(\varphi_1-\varphi_2)}&0 \\
0&0&\frac{K_3}{\sin(\varphi_1-\varphi_3)\sin(\varphi_2-\varphi_3)} 
\end{pmatrix}
}
$ and,
{\small{
$$
\displaystyle{
\cM
:=
\begin{pmatrix}
8 \gamma^2 \sin\varphi_2\sin\varphi_3&
-4\ii\gamma(2\ii \sin\varphi_2\sin\varphi_3 - \sin(\varphi_2+\varphi_3) )&
-2 \ee^{-\ii (\varphi_2+\varphi_3)} \\
8  \gamma^2\sin\varphi_1\sin\varphi_3&
-4\ii\gamma(2\ii \sin\varphi_1\sin\varphi_3 - \sin(\varphi_3+\varphi_1) )&
-2 \ee^{-\ii (\varphi_1+\varphi_3)} \\
8 \gamma^2 \sin\varphi_1\sin\varphi_2&
-4\ii\gamma(2\ii \sin\varphi_1\sin\varphi_2 - \sin(\varphi_1+\varphi_2) )&
-2\ee^{-\ii (\varphi_1+\varphi_2)} \\
\end{pmatrix}.
}
$$
}}
\end{lemma}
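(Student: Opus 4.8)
The plan is to invert the matrix $\cL$ of Lemma~\ref{4lm:dudphi} directly, after noticing that its columns form a Vandermonde system up to explicit scalar factors. First I would remove the $K_j$: every entry in the $j$-th column of $\cL$ carries the factor $1/K_j$, so $\cL=-M\,\mathrm{diag}(1/K_1,1/K_2,1/K_3)$, where the $j$-th column of $M$ is the moment vector $\bm(\varphi_j):=\bigl(\ee^{-2\ii\varphi_j}/(8\gamma^2),\ \ii\,\ee^{-\ii\varphi_j}\sin\varphi_j/(4\gamma),\ -\tfrac12\sin^2\varphi_j\bigr)^{\mathrm t}$ and $M$ no longer involves the $K_j$. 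Hence $\cL^{-1}=-\,\mathrm{diag}(K_1,K_2,K_3)\,M^{-1}$, and it suffices to compute $M^{-1}$, which depends only on $\varphi_1,\varphi_2,\varphi_3$ and $\gamma$. Invertibility is not an issue: by Lemma~\ref{4lm:dudphi} and the product-to-sum identity $\sin(2\varphi_2-2\varphi_1)+\sin(2\varphi_3-2\varphi_2)+\sin(2\varphi_1-2\varphi_3)=-4\sin(\varphi_2-\varphi_1)\sin(\varphi_3-\varphi_2)\sin(\varphi_1-\varphi_3)$, one has $\det\cL\neq0$ exactly when the $\varphi_j$ are pairwise distinct modulo $\pi$, which is the standing hypothesis.

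Next I would expose the Vandermonde structure. Setting $\tau_j:=\ee^{2\ii\varphi_j}$ and using $\sin\varphi=(\ee^{\ii\varphi}-\ee^{-\ii\varphi})/(2\ii)$, each component of $\bm(\varphi_j)$ becomes a fixed linear combination of $\tau_j,1,\tau_j^{-1}$; explicitly $\bm(\varphi_j)=\tfrac18\,P\,(\tau_j,1,\tau_j^{-1})^{\mathrm t}$ with the constant invertible matrix $P=\bigl(\begin{smallmatrix}0&0&\gamma^{-2}\\ 0&\gamma^{-1}&-\gamma^{-1}\\ 1&-2&1\end{smallmatrix}\bigr)$. Therefore $M=\tfrac18\,P\,W\,\mathrm{diag}(\tau_1^{-1},\tau_2^{-1},\tau_3^{-1})$, where $W$ is the Vandermonde matrix whose $j$-th column is $(\tau_j^{2},\tau_j,1)^{\mathrm t}$, and so $M^{-1}=8\,\mathrm{diag}(\tau_1,\tau_2,\tau_3)\,W^{-1}P^{-1}$. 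The inverse $W^{-1}$ is the classical Lagrange-interpolation inverse, whose $i$-th row carries the denominator $\prod_{k\neq i}(\tau_i-\tau_k)$; the factorization $\tau_i-\tau_k=2\ii\,\ee^{\ii(\varphi_i+\varphi_k)}\sin(\varphi_i-\varphi_k)$ turns these denominators into precisely the products $\sin(\varphi_j-\varphi_i)\sin(\varphi_k-\varphi_i)$ that appear in $\cK$, which is the structural reason the claimed factorization $\cL^{-1}=\cK\cM$ exists.

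Finally I would assemble the pieces: $\cL^{-1}=-8\,\mathrm{diag}(K_1\tau_1,K_2\tau_2,K_3\tau_3)\,W^{-1}P^{-1}$, and it remains to multiply out the explicit $W^{-1}$, $P^{-1}$, and the diagonal prefactor and match the result to $\cK\cM$ entry by entry. The $\sin$-difference denominators together with the $K_i$ collect into the diagonal $\cK$, while the residual exponentials recombine---via $\ee^{\ii\alpha}-\ee^{\ii\beta}$ and $\sin\alpha\sin\beta$ identities---into the entries $8\gamma^{2}\sin\varphi_j\sin\varphi_k$, $-4\ii\gamma\,(2\ii\sin\varphi_j\sin\varphi_k-\sin(\varphi_j+\varphi_k))$, and $-2\,\ee^{-\ii(\varphi_j+\varphi_k)}$ of $\cM$. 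I expect the main obstacle to be exactly this bookkeeping---keeping track of the signs, of the powers of $\gamma$ and of $2$, and of the consolidation of products of exponentials into the $\sin(\varphi_j+\varphi_k)$ and $\ee^{-\ii(\varphi_j+\varphi_k)}$ forms---so that all nine entries agree. Since no conceptual difficulty survives the Vandermonde reduction, an equally valid route, and a useful cross-check, is to skip the derivation of $\cM$ and instead verify $\cL\,\cK\cM=I_3$ directly: here the $K_r$ cancel completely, because each column of $\cL$ contributes a factor $1/K_r$ and the matching row of $\cK\cM$ contributes $K_r$, so every entry reduces to one of the trigonometric identities already encoded in the Vandermonde structure.
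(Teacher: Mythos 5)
Your proposal is correct, and the key reduction is sound: each column of $\cL$ carries the factor $1/K_j$, the residual columns are (up to the constant invertible matrix $P$ and the diagonal factor $\tau_j^{-1}$) the Vandermonde columns $(\tau_j^2,\tau_j,1)^{\mathrm t}$ in $\tau_j=\ee^{2\ii\varphi_j}$, and the Lagrange denominators $\prod_{k\neq i}(\tau_i-\tau_k)=-4\,\ee^{\ii(2\varphi_i+\varphi_j+\varphi_k)}\sin(\varphi_j-\varphi_i)\sin(\varphi_k-\varphi_i)$ produce exactly the diagonal $\cK$, with the prefactor $-2\ee^{-\ii(\varphi_j+\varphi_k)}$ absorbed into $\cM$; I checked that the first row of $-2\ee^{-\ii(\varphi_2+\varphi_3)}\bigl(1,-(\tau_2+\tau_3),\tau_2\tau_3\bigr)P^{-1}$ reproduces the stated entries $8\gamma^2\sin\varphi_2\sin\varphi_3$, $-4\ii\gamma(2\ii\sin\varphi_2\sin\varphi_3-\sin(\varphi_2+\varphi_3))$, $-2\ee^{-\ii(\varphi_2+\varphi_3)}$, and the other rows follow by the same computation. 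The paper's own proof consists of the single sentence that straightforward computation shows the claim, so your second suggested route (verifying $\cL\,\cK\cM=I_3$ entry by entry, with the $K_r$ cancelling) is essentially what the paper intends; your Vandermonde factorization is a genuinely more structured version of that computation, and it buys two things the paper does not make explicit: a conceptual explanation of why the inverse factors as a diagonal matrix times a matrix independent of the $K_j$, and a clean invertibility criterion, since the identity $\sin(2\varphi_2-2\varphi_1)+\sin(2\varphi_3-2\varphi_2)+\sin(2\varphi_1-2\varphi_3)=-4\sin(\varphi_2-\varphi_1)\sin(\varphi_3-\varphi_2)\sin(\varphi_1-\varphi_3)$ shows $\det\cL\neq0$ precisely when the $\varphi_j$ are pairwise distinct modulo $\pi$, slightly sharpening the lemma's hypothesis.
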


\begin{proof}
The straightforward computations show it. \qed
\end{proof}

Let us focus on the differential equation,
\begin{equation}
\begin{pmatrix} d \varphi_{1,\rr} \\ d \varphi_{2,\rr} \\
d \varphi_{3,\rr}\end{pmatrix}
=\begin{pmatrix}
\frac{2K_1\tgamma_1\cos(\varphi_2-\varphi_3)}{\sin(\varphi_2-\varphi_1)\sin(\varphi_3-\varphi_1)}\\
\frac{2K_2\tgamma_2\cos(\varphi_3-\varphi_1)}{\sin(\varphi_3-\varphi_2)\sin(\varphi_1-\varphi_2)}\\
\frac{2K_3\tgamma_3\cos(\varphi_1-\varphi_2)}{\sin(\varphi_1-\varphi_3)\sin(\varphi_2-\varphi_3)}
\end{pmatrix}ds
\label{eq:g3CIII}
\end{equation}
where $ds$ is the one-form on the real axis, or $s \in \RR$.
Then we have
$\begin{pmatrix} 0 \\ -ds/2\\ ds\end{pmatrix}
=\cL \begin{pmatrix} d \varphi_{1,\rr} \\ d \varphi_{2,\rr} \\
d \varphi_{3,\rr}\end{pmatrix}$ 
Here $s$ corresponds to the arclength of the excited states of elastica and $u_{g,\rr}$ in (\ref{4eq:gaugedMKdV2}) of $g=3$.

Theorem 4.4 in \cite{M24a} reads as follows.
\begin{theorem}\label{pr:solgMKdV}
For a solution of the differential equation (\ref{eq:g3CIII}), $(\varphi_{1}(s), \varphi_{2}(s), \varphi_{3}(s)) \in \RR^3$ if exists, we let $\psi_{\rr}(s)=2(\varphi_{1}(s)+ \varphi_{2}(s)+\varphi_{3}(s))$, 
and then $\psi_\rr(s)$ is a cross-section of $\psi_\rr(u_2, u_3)|_{u_3=s, u_2=-s/2}$ of $\psi_\rr(u_2, u_3)$ which satisfies the gauged MKdV equation,
\begin{equation}
(\partial_{u_1}+\tA(u_2,u_3)\partial_{u_2})\psi_\rr(u_2, u_3)
           +\frac{1}{8}
\left(\partial_{u_3} \psi_\rr(u_2, u_3)\right)^3
+\frac{1}{4}\partial_{u_3}^3 \psi_\rr(u_2, u_3)=0,
\label{4eq:gaugedMKdV2a}
\end{equation}
where the gauge field is $\tA(u_1, u_2)=(\lambda_{6}-3-\frac{3}{4}(\partial_{u_3}\psi_\ri(u_2, u_3))^2)/2$ given by $\partial_{u_3}\psi_\ri(u_2, u_3)=\partial_{u_3} \varphi_{1,\ri}+\partial_{u_3} \varphi_{2,\ri}+\partial_{u_3} \varphi_{3,\ri}$, and
\begin{equation}
\begin{pmatrix} 
\partial_{u_3} \varphi_{1,\ri} \\ 
\partial_{u_3} \varphi_{2,\ri} \\
\partial_{u_3} \varphi_{3,\ri}
\end{pmatrix}
=\begin{pmatrix}
\frac{2K_1\tgamma_1\sin(\varphi_2+\varphi_3)}{\sin(\varphi_2-\varphi_1)\sin(\varphi_3-\varphi_1)}\\
\frac{2K_2\tgamma_2\sin(\varphi_3+\varphi_1)}{\sin(\varphi_3-\varphi_2)\sin(\varphi_1-\varphi_2)}\\
\frac{2K_3\tgamma_3\sin(\varphi_1+\varphi_2)}{\sin(\varphi_1-\varphi_3)\sin(\varphi_2-\varphi_3)}
\end{pmatrix}.
\label{eq:g3CIIIi}
\end{equation}
\end{theorem}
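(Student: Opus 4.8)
The strategy is to obtain the gauged equation (\ref{4eq:gaugedMKdV2a}) and the auxiliary system (\ref{eq:g3CIIIi}) as the real and imaginary shadows of the single complex identity in Theorem \ref{4th:MKdVloop}, and to recognize the quadrature (\ref{eq:g3CIII}) as the inverse differential of Lemma \ref{4lm:dudphi3} restricted to one real direction. First I would fix the distinguished branch point of Theorem \ref{4th:MKdVloop} to be the center $b_a=b_0$ of the circle in Figure \ref{fg:Fig01}; with $\gamma=1$ this makes $|x_i-b_0|=1$, so condition CI holds identically. Substituting $x_i-b_0=\ee^{2\ii\varphi_i}$ into $\psi=-\ii\log\prod_{i=1}^3(b_0-x_i)$ gives $\psi=-\ii\log\bigl((-1)^3\ee^{2\ii(\varphi_1+\varphi_2+\varphi_3)}\bigr)$, and writing $\varphi_i=\varphi_{i,\rr}+\ii\varphi_{i,\ri}$ one reads off, up to an additive real constant, $\psi_\rr=2(\varphi_{1,\rr}+\varphi_{2,\rr}+\varphi_{3,\rr})$ and $\psi_\ri=2(\varphi_{1,\ri}+\varphi_{2,\ri}+\varphi_{3,\ri})$. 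The first identity is exactly the relation $\psi_\rr(s)=2(\varphi_1+\varphi_2+\varphi_3)$ of the statement; the second determines $\partial_{u_3}\psi_\ri$, hence the gauge field $\tA$, once the imaginary quadrature (\ref{eq:g3CIIIi}) is known.

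Next I would feed this into Theorem \ref{4th:MKdVloop} with $g=3$: $\psi$ solves the complex equation (\ref{4eq:loopMKdV2}). Because the Abelian coordinates $u_i$ are holomorphic, the Cauchy--Riemann relations $\partial_{u_{i,\rr}}\psi_\rr=\partial_{u_{i,\ri}}\psi_\ri$ and $\partial_{u_{i,\rr}}\psi_\ri=-\partial_{u_{i,\ri}}\psi_\rr$ let me split (\ref{4eq:loopMKdV2}) into real and imaginary parts exactly as in the passage from (\ref{4eq:loopMKdV2}) to (\ref{4eq:gaugedMKdV2}) recorded after Theorem \ref{4th:MKdVloop}. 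The cubic term is where the gauge field is generated: since the real part of $(\partial_{u_3}\psi)^3$ is $(\partial_{u_3}\psi_\rr)^3-3(\partial_{u_3}\psi_\rr)(\partial_{u_3}\psi_\ri)^2$, the second piece combines with the linear term to produce the coefficient $-\tfrac34(\partial_{u_3}\psi_\ri)^2$ inside $\tA$. Specializing the linear constant $\tfrac12(\lambda_{2g}+3b_a)$ to $g=3$, $b_a=b_0$ under the normalization of (\ref{4eq:HEcurve_phi}) yields $\tfrac12(\lambda_6-3)$, and collecting terms produces (\ref{4eq:gaugedMKdV2a}) (with the flow/space indices as labelled there). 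The imaginary part of the same splitting is the equation whose quadrature is recorded in (\ref{eq:g3CIIIi}).

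It then remains to match the two explicit vector relations with Lemma \ref{4lm:dudphi3}. I would take the inverse differential $d\varphi=\cK\cM\,du$ and restrict it to the real line $du_1=0$, $du_2=-ds/2$, $du_3=ds$, which is precisely the relation $(0,-ds/2,ds)=\cL\,(d\varphi_{1,\rr},d\varphi_{2,\rr},d\varphi_{3,\rr})$ stated before the theorem (read as column vectors). Because $\cK$ is diagonal, the $i$-th component reads $\cK_i\bigl[-\tfrac12\cM_{i2}+\cM_{i3}\bigr]\,ds$; a direct trigonometric reduction (using $\gamma=1$ and $\cos(\varphi_j+\varphi_k)=\cos\varphi_j\cos\varphi_k-\sin\varphi_j\sin\varphi_k$) collapses the bracket to $-2\cos(\varphi_j-\varphi_k)$, where $\{j,k\}$ is the complement of $i$, and its real part is (\ref{eq:g3CIII}). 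Taking instead the imaginary part of the single entry $\cK_i\cM_{i3}=-2\cK_i\ee^{-\ii(\varphi_j+\varphi_k)}$ gives (\ref{eq:g3CIIIi}), which supplies $\partial_{u_3}\psi_\ri$ in $\tA$. Hence a real solution $(\varphi_1(s),\varphi_2(s),\varphi_3(s))$ of (\ref{eq:g3CIII}) is the image under the inverse Abel map of the line $u=(\mathrm{const},-s/2,s)$, and $\psi_\rr(s)=2\sum_j\varphi_j(s)$ is the restriction to that line of the function $\psi_\rr$ solving (\ref{4eq:gaugedMKdV2a}).

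The main obstacle is the Cauchy--Riemann bookkeeping of the second paragraph: one must verify that separating the cubic and linear terms of (\ref{4eq:loopMKdV2}) reproduces the gauged equation with the exact coefficient $-\tfrac34$ on $(\partial_{u_3}\psi_\ri)^2$ and the precise constant $\tfrac12(\lambda_6-3)$, while simultaneously the imaginary part is consistent with the quadrature (\ref{eq:g3CIIIi}), so that $\tA$ is internally well-defined. A secondary, essentially mechanical difficulty is the sign and $\tgamma_j=\pm1$ bookkeeping when matching the rows of $\cK\cM$ to (\ref{eq:g3CIII}) and (\ref{eq:g3CIIIi}) — the $\tgamma_j$ track the sheet of the double cover $\hX$ and must be pinned down on the real locus — together with checking, via condition CII, that the imaginary directions $du_{2,\ri}=du_{3,\ri}=0$ are frozen along the cross-section, so that $\psi_\rr$, as a function of the real variables $(u_2,u_3)$, is a bona fide solution of (\ref{4eq:gaugedMKdV2a}) whose restriction to the line is the claimed $\psi_\rr(s)$.
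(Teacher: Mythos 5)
Your proposal is correct and follows exactly the route the paper lays out: split the complex identity of Theorem \ref{4th:MKdVloop} (with $b_a=b_0$, so that CI holds and $\psi=2\sum_j\varphi_j+\mathrm{const}$) into real and imaginary parts via Cauchy--Riemann to get the gauged equation, and read (\ref{eq:g3CIII}) and (\ref{eq:g3CIIIi}) off the rows of $\cK\cM$ restricted to $du=(0,-\tfrac12,1)\,ds$ using Lemma \ref{4lm:dudphi3}; your trigonometric collapse of $-\tfrac12\cM_{i2}+\cM_{i3}$ to $-2\cos(\varphi_j-\varphi_k)$ checks out. Note that the paper itself states this theorem without proof (deferring to \cite{M24a}), and the residual sign/index discrepancies you flag (e.g.\ $\partial_{u_1}+\tA\partial_{u_2}$ versus $\partial_{u_2}+\tA\partial_{u_3}$, and the factor of $2$ in $\partial_{u_3}\psi_\ri$) appear to be typographical in the statement rather than gaps in your argument.
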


We note that (\ref{4eq:gaugedMKdV2a}) is a differential identity of $\psi_{\rr}$ for every hyperelliptic curve as in \cite{Mat02b}.

It means the following theorem:
\begin{theorem}\label{4th:reality_g3}
$\psi_{\rr}:=2(\varphi_1+ \varphi_2+\varphi_3)$ of the quadrature $d \varphi_{i, \rr}$ $(i=1,2,3)$ of (\ref{eq:g3CIII}) shows a local solution of the MKdV equation (\ref{4eq:rMKdV2}) for the region that $\partial_{u_3} \psi_\ri$ is constant.

Further on the region that $\partial_{u_3} \psi_\ri$ is approximately constant, the $\psi_{\rr}$ shows the approximate solutions of the MKdV equation.

\end{theorem}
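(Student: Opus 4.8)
The plan is to read Theorem \ref{4th:reality_g3} as an immediate corollary of Theorem \ref{pr:solgMKdV} together with the reduction recorded in condition CIII, the one genuinely new remark being that the gauge field depends on the configuration through a single scalar. First I would note that Theorem \ref{pr:solgMKdV} already delivers, for \emph{every} genus-three curve of the form (\ref{4eq:HEcurve_phi}), that the function $\psi_\rr = 2(\varphi_1 + \varphi_2 + \varphi_3)$ assembled from the real quadrature (\ref{eq:g3CIII}) satisfies the gauged equation (\ref{4eq:gaugedMKdV2a}) with gauge field
$$
\tA = \tfrac12\Bigl(\lambda_6 - 3 - \tfrac34(\partial_{u_3}\psi_\ri)^2\Bigr),
$$
where $\lambda_6 - 3$ is a constant attached to the curve and $\partial_{u_3}\psi_\ri = \partial_{u_3}\varphi_{1,\ri} + \partial_{u_3}\varphi_{2,\ri} + \partial_{u_3}\varphi_{3,\ri}$ is the quantity produced by the companion quadrature (\ref{eq:g3CIIIi}). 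Hence the \emph{only} ingredient of $\tA$ that can vary along the cross-section $u_3 = s$, $u_2 = -s/2$ is the scalar $(\partial_{u_3}\psi_\ri)^2$.

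Second, I would dispatch the exact assertion. On a region of $s$ on which $\partial_{u_3}\psi_\ri$ is constant, the term $(\partial_{u_3}\psi_\ri)^2$ is constant, so $\tA$ is a real constant there. This is precisely the hypothesis of CIII, under which (\ref{4eq:gaugedMKdV2a}) loses its gauge term and becomes the ungauged equation (\ref{4eq:rMKdV2}), with the constant value of $\tA$ absorbed into the coefficient $\alpha$ (the directions $t, s$ being identified with $u_2, u_3$ exactly as in the passage from Theorem \ref{4th:MKdVloop} to (\ref{4eq:gaugedMKdV2})). Thus $\psi_\rr$ is a local solution of (\ref{4eq:rMKdV2}) on that region. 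I would also record that the remaining reality inputs are built into the construction rather than imposed anew: CI holds because the angle parametrization fixes the radius $|x - b_0| = \gamma = 1$, so the relevant modulus is constant, and CII holds because the cross-section keeps $u_2, u_3$ real; CIII is therefore the only condition left to verify, and constancy of $\partial_{u_3}\psi_\ri$ is exactly what verifies it.

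Third, for the approximate statement I would argue by perturbation. Writing $\tA = \alpha + \delta(s)$ with $\alpha$ the local mean and $\delta = -\tfrac38\bigl[(\partial_{u_3}\psi_\ri)^2 - \overline{(\partial_{u_3}\psi_\ri)^2}\bigr]$ the fluctuation, substitution into (\ref{4eq:gaugedMKdV2a}) reproduces (\ref{4eq:rMKdV2}) up to one residual term linear in $\delta$ and in a first-order derivative of $\psi_\rr$. Since $\delta$ is a smooth multiple of the deviation of $\partial_{u_3}\psi_\ri$ from its mean, this residual is bounded by $\sup|\partial_{u_3}\psi_\ri - \mathrm{const}|$ times a bounded factor; on a region where $\partial_{u_3}\psi_\ri$ is approximately constant the residual is correspondingly small, so $\psi_\rr$ solves (\ref{4eq:rMKdV2}) up to a small error, i.e. is an approximate solution.

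The reduction itself is formal, so the real difficulty is not visible in the conditional phrasing: nothing above guarantees that regions with (near-)constant $\partial_{u_3}\psi_\ri$ actually occur for non-degenerate genus-three curves, nor that they are long enough to carry a meaningful stretch of the elastica, nor how small the fluctuation $\delta$ can be kept. Those are quantitative questions about integrating (\ref{eq:g3CIIIi}) along the orbits of Figure \ref{fg:Fig01}, and I expect them to be the genuine obstacle; they are precisely what is settled numerically in Section 5 and in \cite{M24a}, rather than by the algebraic identities used in the reduction.
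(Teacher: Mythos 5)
Your proposal is correct and follows essentially the same route as the paper, which presents Theorem \ref{4th:reality_g3} as an immediate consequence of Theorem \ref{pr:solgMKdV} (``It means the following theorem''): constancy of $\partial_{u_3}\psi_\ri$ makes the gauge field $\tA$ a real constant, which is exactly condition CIII reducing (\ref{4eq:gaugedMKdV2a}) to (\ref{4eq:rMKdV2}), with the approximate case handled perturbatively. Your closing remark correctly identifies that the existence and extent of regions with (near-)constant $\partial_{u_3}\psi_\ri$ is not settled by this reduction but only by the numerics of Section \ref{sec:Algorithm} and \cite{M24a}, which matches the paper's own treatment.
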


\section{Numerical results and discussion}\label{sec:Algorithm}

We demonstrate the shapes of excited states of elastica of genus three, a closed solution and three open solutions.

We show the numerical integration (\ref{eq:g3CIII}) as follows.
\begin{enumerate}

\item We set $(k_1, k_2, k_3)$ to determine a hyperelliptic curve.
We assume Figure \ref{fg:Fig01} (a) or (b).
$k_1> k_2 > k_3>1.0$ and $k_3>k_2>k_1>1.0$.
We explain mainly the case (a).
Let $\varphi_{\fb}:=\varphi_{\fb1}^{++}$.

\item We set the initial condition $(\varphi_1, \varphi_2, \varphi_3)|_{s=0}$ that the imaginary part of $2\partial(\varphi_{1,\ri}+ \varphi_{2,\ri}+ \varphi_{3,\ri})$ in (\ref{eq:g3CIIIi}) vanishes.

\item We employ the Euler method of the numerical quadrature method for a sufficiently small real value $\delta s$ to solve (\ref{eq:g3CIII}) numerically so that the orbit of $\varphi_a$ moves back and forth between the branch points $[-\varphi_\fb, \varphi_\fb]$ in (\ref{4eq:HEcurve_phi}) as in Figure \ref{fg:Fig01} (a).

We note that at the branch point, the orbit of $\varphi_a$ turns the direction by changing the sign of $\tgamma_a$ so that $\tgamma_a \sin^2(\varphi_a)\delta \varphi_a/2K_a$ is positive as in Figure \ref{fg:Fig01};
it moves the different leaf of the Riemann surface with respect to the projection $X_3 \to \PP$ $((x,y) \mapsto x)$ after passing the branch points.

Following the Euler method of the numerical quadrature method for the $n$-step, we obtain the $\varphi_{a}$ development,
$$
\varphi_{a,n+1} := \varphi_{a,n} + \delta \varphi, \quad
(a=1,2,3).
$$
We let $\psi_n = 2(\varphi_{1,n}+\varphi_{2,n}+\varphi_{3,n}) + \psi_c$ so that $\psi_0$ is a certain value (in the following results, $\psi_0=0$), and numerically integrate
$$
X_{n+1} = X_n + \cos(\psi_n) \delta s, \quad
Y_{n+1} = Y_n + \sin(\psi_n) \delta s,
$$
to obtain the excited states of elastica $(X(s), Y(s))$ of a certain $(u_1, u_2, u_3)$ point in $J_X$.

\item We monitor the imaginary part of $\psi_\ri:=\displaystyle{\int^s \partial_{u_3} \psi_{\ri}ds}$ for $\partial_{u_3} \psi_\ri$ $=2\partial(\varphi_{1,\ri}+ \varphi_{2,\ri}+ \varphi_{3,\ri})$ in (\ref{eq:g3CIIIi}).
Though the above theorem can hold if $\partial_{u_3} \psi_\ri=$constant number $c\in \RR$, we focus on the case where $c\sim0$.
Thus we monitor the flatness of $\psi_{\ri}$.

\end{enumerate}

The first result is displayed in Figure \ref{fg:shape01}.
For the hyperelliptic curve given by $(k_1, k_2, k_3) = (1.04, 1.0392, 1.010)$, we set the initial condition $(\varphi_1, \varphi_2, \varphi_3) = (\varphi_\fb, -0.90, -0.90)$.
Figure \ref{fg:shape01} (a) shows the profile of $\psi_\rr$ and $\psi_\ri$ and (b) shows the shape of the excited states of elastica.
The maximum of $\partial_{u_3} \psi_\ri$ is $1.00952\times 10^{-02}$.
In other words, the orbit is considered to be one of the MKdV equation (\ref{4eq:rMKdV2}) rather than the gauged MKdV equation (\ref{4eq:gaugedMKdV2}).

\begin{figure}
\begin{center}
\includegraphics[width=0.50\hsize]{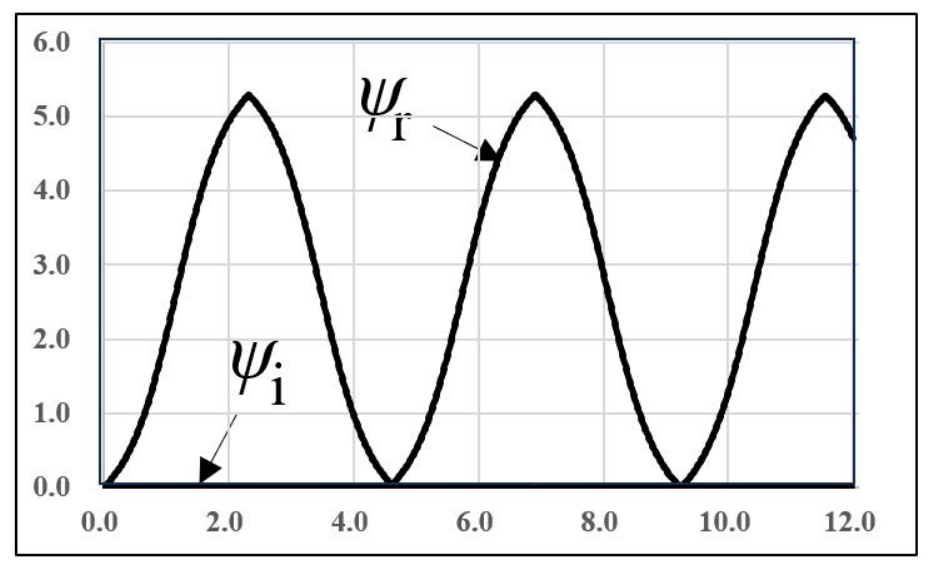}
\hskip 0.1\hsize
\includegraphics[width=0.30\hsize]{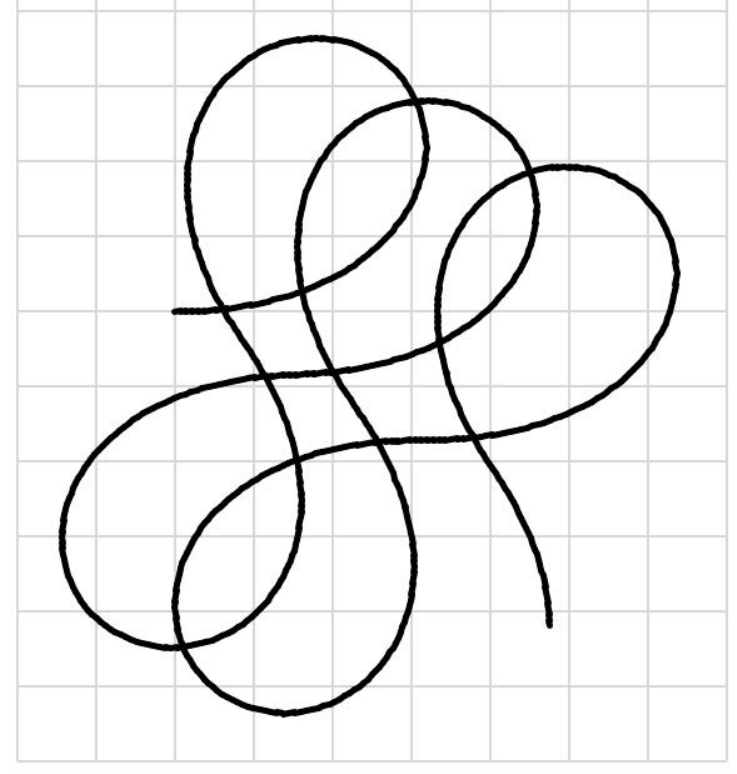}

(a) \hskip 0.35\hsize (b)

\end{center}

\caption{
An open excited state of elastica:
$(k_1, k_2, k_3) = (1.04, 1.0392, 1.010)$, and the initial condition is  $(\varphi_1, \varphi_2, \varphi_3) = (\varphi_\fb, -0.90, -0.90)$.
(a): $\psi_\rr$ and $\psi_\ri$, and 
(b): a shape of the excited state of elastica. 
}\label{fg:shape01}
\end{figure}

The second result is presented in Figure \ref{fg:shape02}.
We used the hyperelliptic curve given by $(k_1, k_2, k_3) = (1.0260, 1.0259, 1.0008)$.
The initial condition is given by $(\varphi_1, \varphi_2, \varphi_3) = (\varphi_\fb, 0.0, 0.0)$.
Figure \ref{fg:shape02} (a) shows the profile of $\psi_\rr$ and $\psi_\ri$ and (b) shows the shape of the excited state of elastica.
The orbit could be considered as an approximation of the MKdV equation (\ref{4eq:rMKdV2}) rather than the measured MKdV equation (\ref{4eq:gaugedMKdV2}).
Figures \ref{fg:shape02} (e) and (f) show the respective parts of $\partial_{u_3} \varphi_{i,\rr}$ and $\partial_{u_3} \varphi_{i,\ri}$.

\begin{figure}
\begin{center}
\includegraphics[width=0.38\hsize]{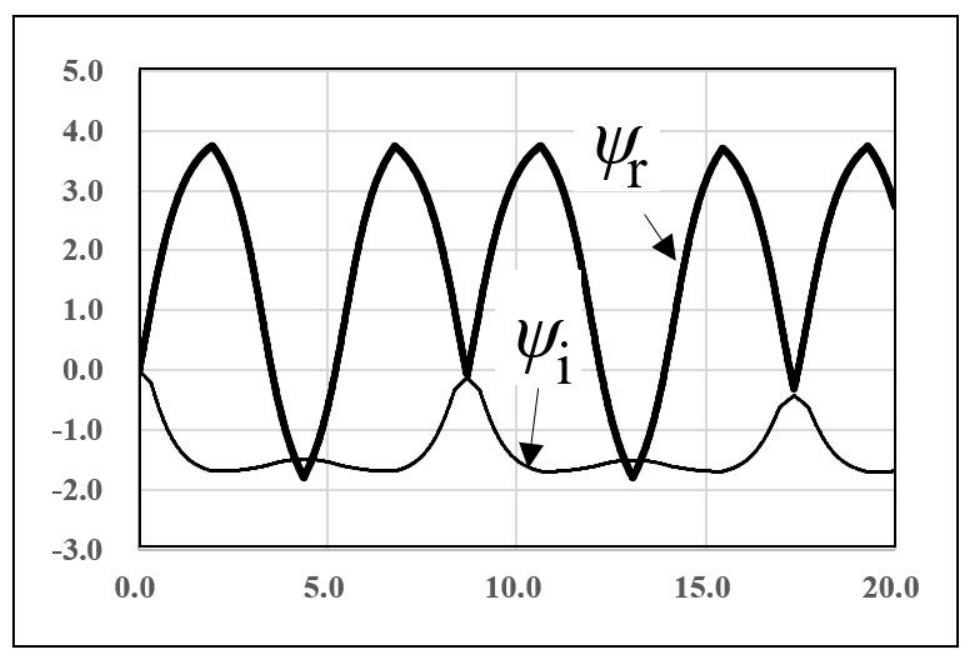}
\hskip 0.05\hsize
\includegraphics[width=0.53\hsize]{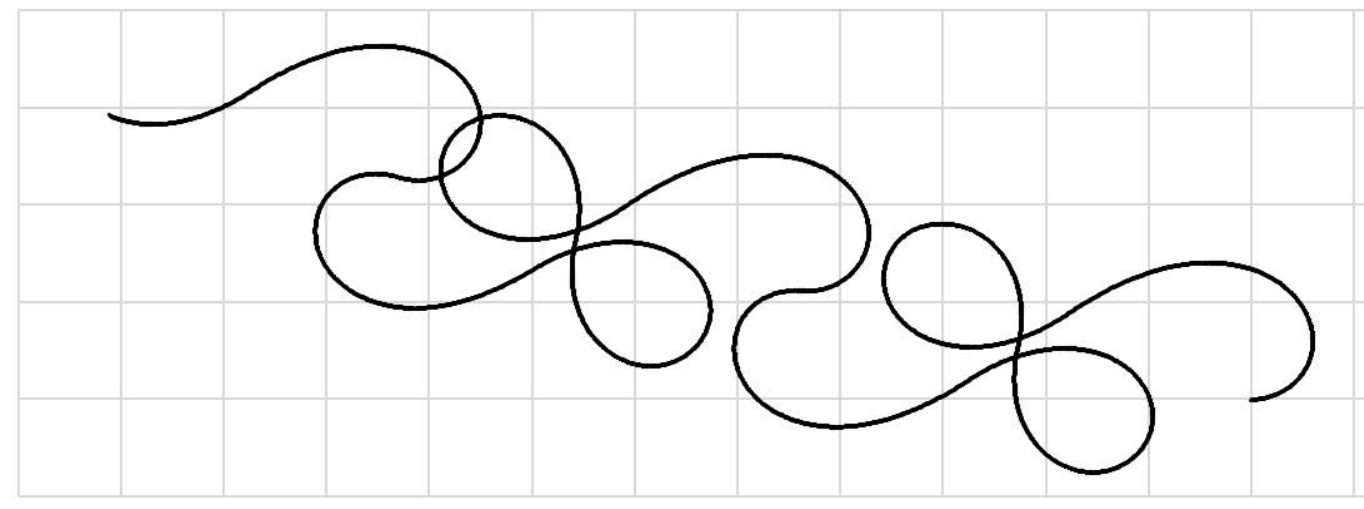}

(a) \hskip 0.3\hsize (b)

\end{center}

\caption{
An open excited state of elastica:
$(k_1, k_2, k_3) = (1.0260, 1.0259, 1.0008)$ 
$(\varphi_1, \varphi_2, \varphi_3) = (\varphi_\fb, 0.0, 0.0)$
(a) is the profile of $\psi_\rr$ and $\psi_\ri$ and 
(b) is the shape of the excited state of elastica.
}\label{fg:shape02}
\end{figure}

\begin{figure}
\begin{center}
\includegraphics[width=0.43\hsize]{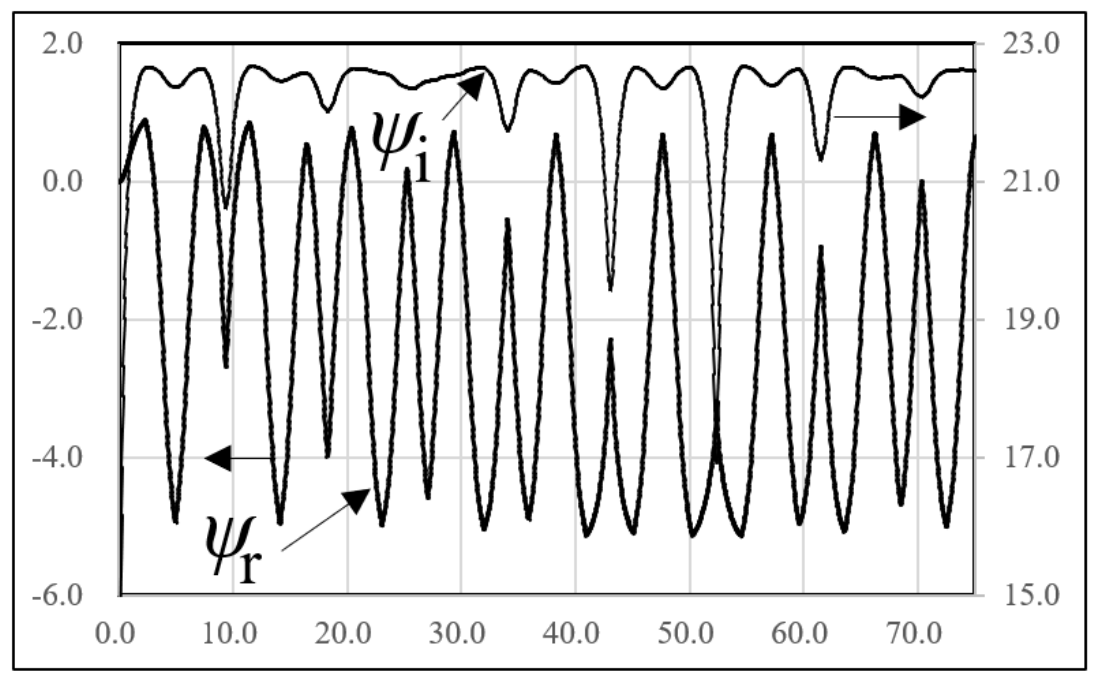}
\hskip 0.00\hsize
\includegraphics[width=0.55\hsize]{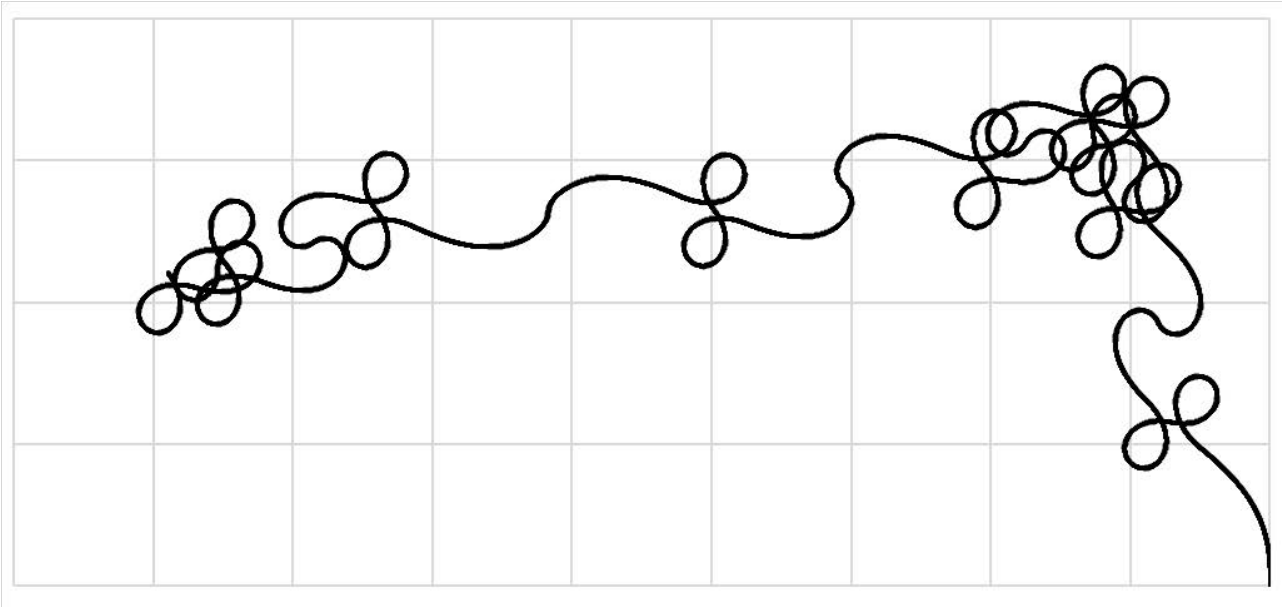}

(a) \hskip 0.35\hsize (b)

\smallskip

%\hskip 0.4\hsize (b)

\includegraphics[width=0.55\hsize]{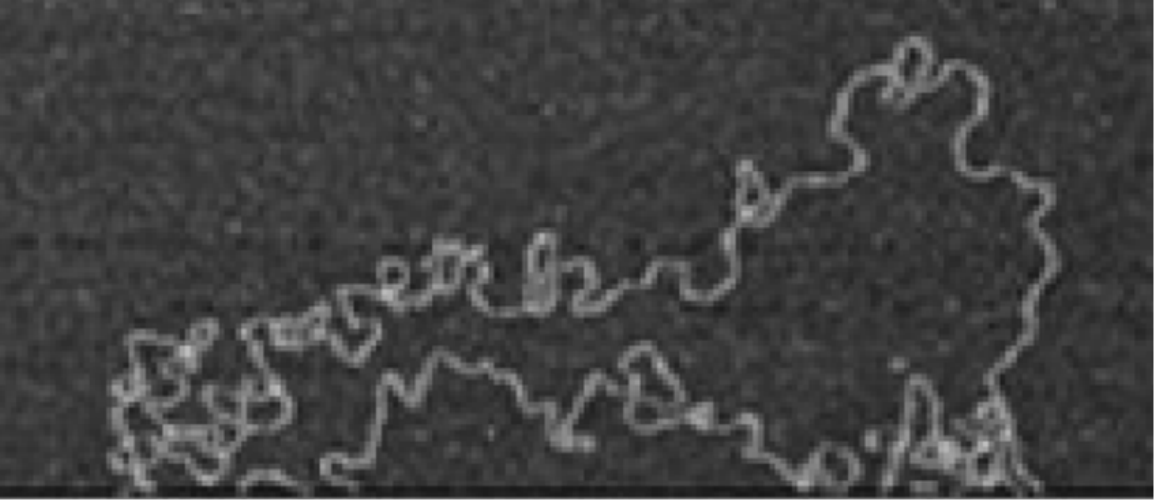}

(c)

\end{center}

\caption{
An open curve of excited state of elastica and shape of supercoiled DNA:
$(k_1, k_2, k_3) = (6.00, 7.00, 8.00)$, and the initial condition is  
$(\varphi_1, \varphi_2, \varphi_3) = (\pi-\varphi_\fb, 1.4, 1.4)$.
(a) is the profile of $\psi_\rr$ and $\psi_\ri$, 
(b) is a shape of the excited state of elastica, and 
(c) is the shape of a supercoiled DNA, which is a part of the AFM images in \cite[Figure 4]{JMB}.
}\label{fg:shape04}
\end{figure}

\bigskip

\bigskip

Similarly, we can use $\ii ds$ instead of $ds$ and consider the case in Figure \ref{fg:Fig01} (b).
Since replacing $ds$ with $\ii ds$ does not seriously affect the MKdV equation (\ref{4eq:rMKdV2}) by replacing $(dt, \alpha)$ with $(-\ii dt, -\alpha)$, we can use the same algorithm to compute the shape of the excited state of elastica by considering $\ii K_a$ instead of $K_a$ for Figure \ref{fg:Fig01} (b).
The result is Figure \ref{fg:shape04}.
Figure \ref{fg:shape04} (a) and (b) show profile of $\psi_\rr$ and $\psi_\ri$, and the shape of the excited state of elastica with parameters $(k_1, k_2, k_3) = (6.00, 7.00, 8.00)$ and initial condition $(\varphi_1, \varphi_2, \varphi_3) = (\pi-\varphi_\fb, 1.4, 1.4)$.
Due to $\psi_{\rr}$ and $\psi_{\ri}$ in Figure \ref{fg:shape04} (a), when $\partial_{u_3}\psi_{\ri}$ is constant, $\psi_{\rr}$ is a solution of the MKdV equation (\ref{4eq:rMKdV2}). 
The third example also consists of figure-eight and the inverse of the figure {\lq}S{\rq}, a repetition of the modulation of the figure-eight and the inverse {\lq}S{\rq}, as in Figure \ref{fg:shape04} (b).
We call it the S-eight mode.

In other words, we conclude that our model of the statistical mechanics of elastica shows that in the excited states of elastica due to thermal effect, we find the S-eight mode.
This fact has never been reported anywhere.

On the other hand, in the shape of the supercoiled DNAs in Figure \ref{fg:shape04}(c) \cite[Figure 4]{JMB}, it is surprising that we can find the similar shape, a repetition of the modulation of the figure-eight and the inverse {\lq}S{\rq}, or the S-eight mode.
In their study of a new type of highly ordered DNA organization, Japaridze et al.\~investigated the conformation of DNA both experimentally and theoretically.
They found the new type of order called "hyperplectonemes".
The shape in the figure is shown as the case that has fewer hyperplectonemes. 
In fact, in the case there are similar shapes whose parts have figure-eight and inverse S shapes, or the S-eight mode.
After the study, some of the authors found a parameter which governs the order and controls the degree of order in \cite{OLH20}.
Then we also find the shapes similar to Figure \ref{fg:shape04} (c) in \cite[Figure 7 (c)]{OLH20}, whose part is expressed by the modulation of figure-eight and inverse {\lq}S{\rq}, as in Figure \ref{fg:shape04} (a).

We emphasize that except for the figure-eight given by Euler in 1744, which is found as short closed supercoiled DNAs, e.g. in \cite{Petal}, no one has ever mathematically reproduced any shape of supercoiled DNA with voids.
Thus, this demonstration provides the first step towards the mathematical representation of the conformations of supercoiled DNA.
In other words, the excited states of elastica of genus three of the statistical mechanics of elastica reproduces the geometric property of supercoiled DNA.
The shapes of supercoiled DNA are not rigid, but generally obey weak elastic forces.

Since the MKdV equation preserves the Euler-Bernoulli energy $\displaystyle{\int (\partial_s \psi)^2 ds}$, it can be considered an excited state of elasticity rather than a ground state or minimum energy point.
This means that we are beginning to go beyond the form of Euler's elastica, including thermal effects. \cite{Mat10}.

Thus, we conclude our model of the statistical mechanics of elastica to express a certain class of supercoiled DNA shapes observed in the laboratory.
The S-eight mode exits in the class of the shapes of the supercoiled DNAs.
It shows that there is a deep and beautiful relationship between nature (life sciences) and algebraic geometry in mathematics.

\section{Conclusion}

In this paper, we employed a novel algebro-geometric method in \cite{M24a} to obtain the solutions of the excited states of elastica of the gauged MKdV equation (\ref{4eq:gaugedMKdV2}) as in Proposition \ref{pr:solgMKdV}.
Theorem \ref{4th:reality_g3} shows that they can also be regarded as approximate solutions of the MKdV equation (\ref{4eq:rMKdV2}) if $\partial_{u_3} \psi_{\ri}$ is approximately constant.
Based on them, we provided a concrete algorithm to obtain the numerical solutions of the excited states of elastica in Section \ref{sec:Algorithm}.
We demonstrate typical conformations of the excited states of elastica by numerical computations, which no one ever draw such a complicated shape.

We found that there is a mode, a repetition of the modulation of the figure-eight and the inverse {\lq}S{\rq}.
We call it the S-eight mode.
It means that our model of the statistical mechanics of elastica shows that the excited states of elastica due to thermal effect has the S-eight mode.
Then we found that there exist the shapes with the S-eight mode in the AFM image of the supercoiled DNAs in  \cite[Figure 4]{JMB}.
We conclude that our model expresses a certain class of the shape of supercoiled DNA.
Only the shape of the supercoiled DNA related to figure-eight and circle of Euler's has been expressed by theoretically and numerically.
Though the more complicated ones with void has never obtained, this paper shows such a shape by the algebro-geometric method.

There are so many mathematical models of algebraic geometry related to elementary particle physics and string theory.
This paper shows that there is a deep and beautiful relationship between the biophysics of DNAs and algebraic geometry as mentioned in \cite{Mat10}.

\bigskip
However, while the results of this paper are certainly novel and intriguing, a number of issues need to be resolved in the future.
Though we have computed it, the behavior of the imaginary part $\partial_{u_3} \psi_{\ri}$ is unclear.
The behavior should be considered more precisely in the future to find the solutions of the MKdV equation.
Based on the knowledge, we should find the shapes of the excited states of elastica with the higher genus $g>3$.

Furthermore, we should find the hyperelliptic solutions of the NLS equation beyond \cite{P0} to obtain the excited states of elastica in $\RR^3$ as in \cite{Mat99a}.

\bigskip

\noindent
{\bf{Acknowledgment}:}
This project was started with Emma Previato 2004 in Montreal and had been collaborated until she passed away June 29, 2022.
Though the author started to step to genus three curves without her, he appreciate her contributions and suggestions which she gave him to this project during her lifetime.
Further, it is acknowledged that John McKay who passed way April 2022 invited the author and her to his private seminar in Montreal 2004 since he considered that this project \cite{Mat97} must have been related to his Monster group problem \cite{McKay, MP16}.
This problem might be related to Witten conjecture of loop space via the sigma function \cite{MP16}.
Thus this study is devoted to Emma Previato and John McKay.
The author thanks to Junkichi Satsuma, Takashi Tsuboi, and Tetsuji Tokihiro for inviting him to the Musashino Center of Mathematical Engineering Seminar and for valuable discussions and to Yuta Ogata, Yutaro Kabata and Kaname Matsue for helpful discussions and suggestions.
He is also grateful to Aleksandre Japaridze, Giovanni Longo, and Giovanni Dietler, the authors of \cite{JMB} for helpful comments on Figure 4 in \cite{JMB} and sending him its follow-up interesting article \cite{OLH20}.
He also acknowledges support from the Grant-in-Aid for Scientific Research (C) of Japan Society for the Promotion of Science, Grant No.21K03289.

%%%%%%%%%%%%%%%%%%%%%%%% referenc.tex %%%%%%%%%%%%%%%%%%%%%%%%%%%%%%
% sample references
% %
% Use this file as a template for your own input.
%
%%%%%%%%%%%%%%%%%%%%%%%% Springer-Verlag %%%%%%%%%%%%%%%%%%%%%%%%%%
%
% BibTeX users please use
% \bibliographystyle{}
% \bibliography{}
%

\end{document}